\newlist{compactitem}{itemize}{1}
\setlist[compactitem,1]{label=\textbullet, left=0pt, itemsep=1pt, topsep=1pt, parsep=0pt, partopsep=0pt}
\newtheorem{theorem}{Theorem}[section]
\newtheorem{definition}{Definition}[section]
\providecommand{\ie}{\emph{i.e.,} }
\providecommand{\eg}{\emph{e.g.,} }
\providecommand{\myparab}[1]{\noindent\textbf{#1} }
\newcommand{\sysname}{FlashMoE\xspace}
\providecommand{\alltoall}{\emph{AlltoAll}\xspace}
\title{\sysname: Fast Distributed MoE in a Single Kernel}
\author{%
    Osayamen Jonathan Aimuyo\thanks{Correspondence to \texttt{osayamen@stanford.edu}} \\
    Cornell University\\
    \texttt{oja7@cornell.edu} \\
 \And
 Byungsoo Oh \\
 Cornell University \\
 \texttt{bo239@cornell.edu} \\
 \And
 Rachee Singh \\
 Cornell University \\
 \texttt{rs2293@cornell.edu} \\
}
\begin{document}
    \maketitle
    
\begin{abstract}
    The computational sparsity of Mixture-of-Experts (MoE) models enables sub-linear growth in compute cost as
    model size increases, thus offering a scalable path to training massive neural networks.
    However, existing implementations suffer from low GPU utilization, significant latency overhead,
    and a fundamental inability to leverage task locality,
    primarily due to CPU-managed scheduling, host-initiated communication, and frequent kernel launches.
    To overcome these limitations, we develop~\sysname,
    a fully GPU-resident MoE operator that fuses expert computation and inter-GPU communication
    into a single persistent GPU kernel.
    \sysname enables fine-grained pipelining of dispatch, compute, and combine phases,
    eliminating launch overheads and reducing idle gaps.
    Unlike existing work, \sysname obviates bulk-synchronous collectives
    for one-sided, device-initiated, inter-GPU (R)DMA transfers, thus unlocking
    payload efficiency, where we eliminate bloated or redundant network
    payloads in sparsely activated layers.
    When evaluated on an 8-H100 GPU node with MoE models having up to 128 experts and 16K token sequences,
    \sysname achieves up to \textbf{9}$\times$ higher GPU utilization, \textbf{6}$\times$ lower latency,
    \textbf{5.7}$\times$ higher throughput, and \textbf{4}$\times$ better overlap efficiency
    compared to state-of-the-art baselines—despite using FP32 while baselines use FP16.
    \sysname shows that principled GPU kernel-hardware co-design
    is key to unlocking the performance ceiling of large-scale distributed ML\@.
    We provide code at \url{https://github.com/osayamenja/FlashMoE}.
\begin{figure}[!h]
    \centering
    \begin{subfigure}{0.38\textwidth}
        \centering
        \includegraphics[width=\linewidth, keepaspectratio]{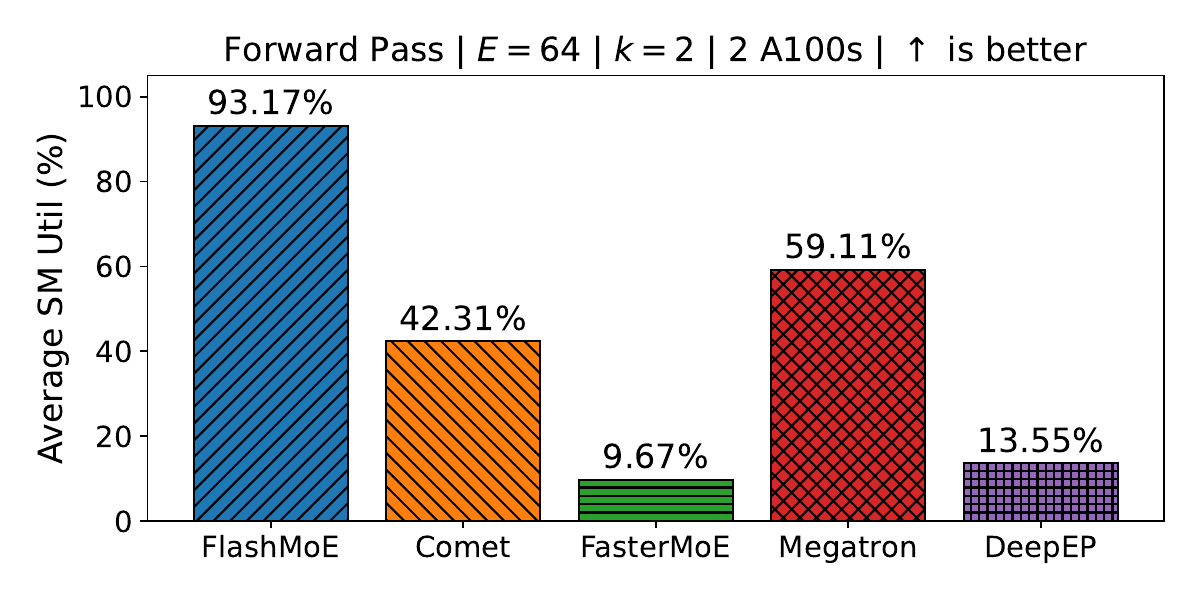}
        \caption{GPU SM Utilization}
        \label{sub:e}
    \end{subfigure}
    \begin{subfigure}{0.38\textwidth}
        \centering
        \includegraphics[width=\linewidth, keepaspectratio]{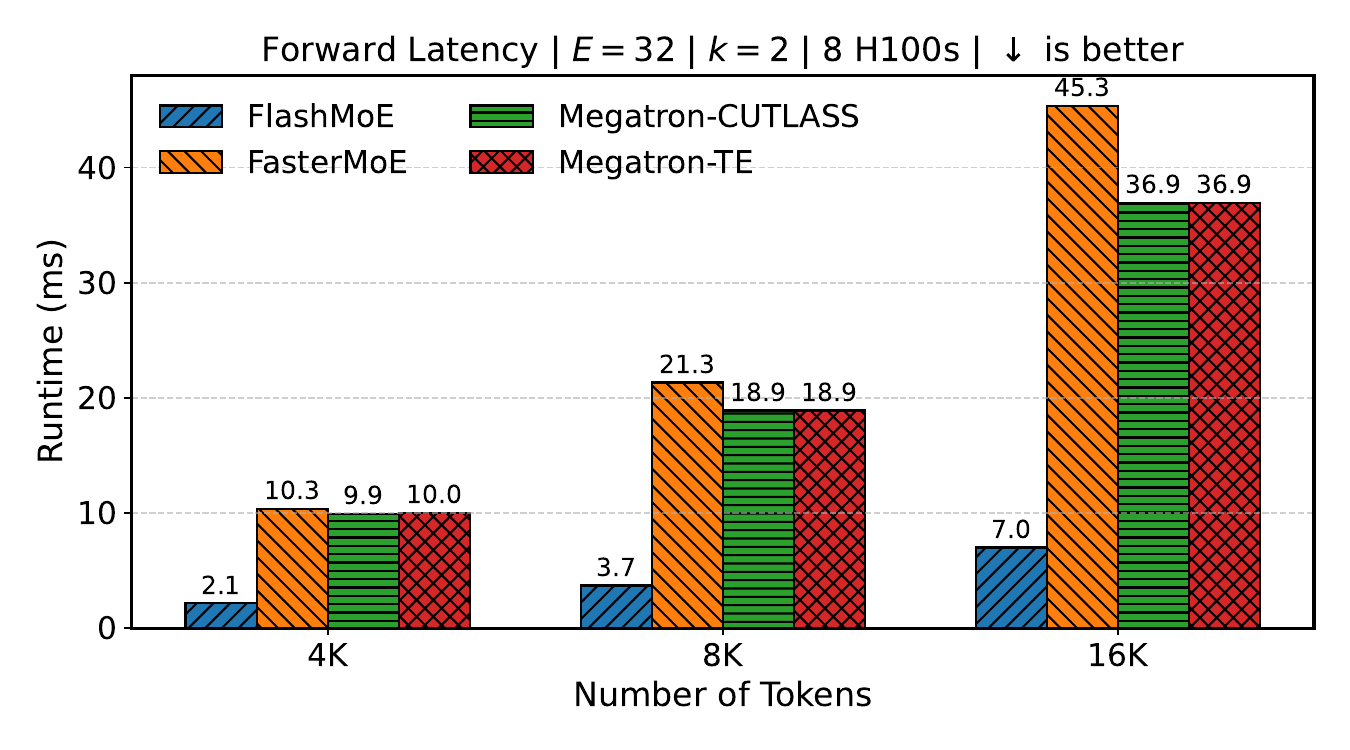}
        \caption{Scaling Tokens}
        \label{sub:r}
    \end{subfigure}
    \begin{subfigure}{0.38\textwidth}
        \centering
        \includegraphics[width=\linewidth, keepaspectratio]{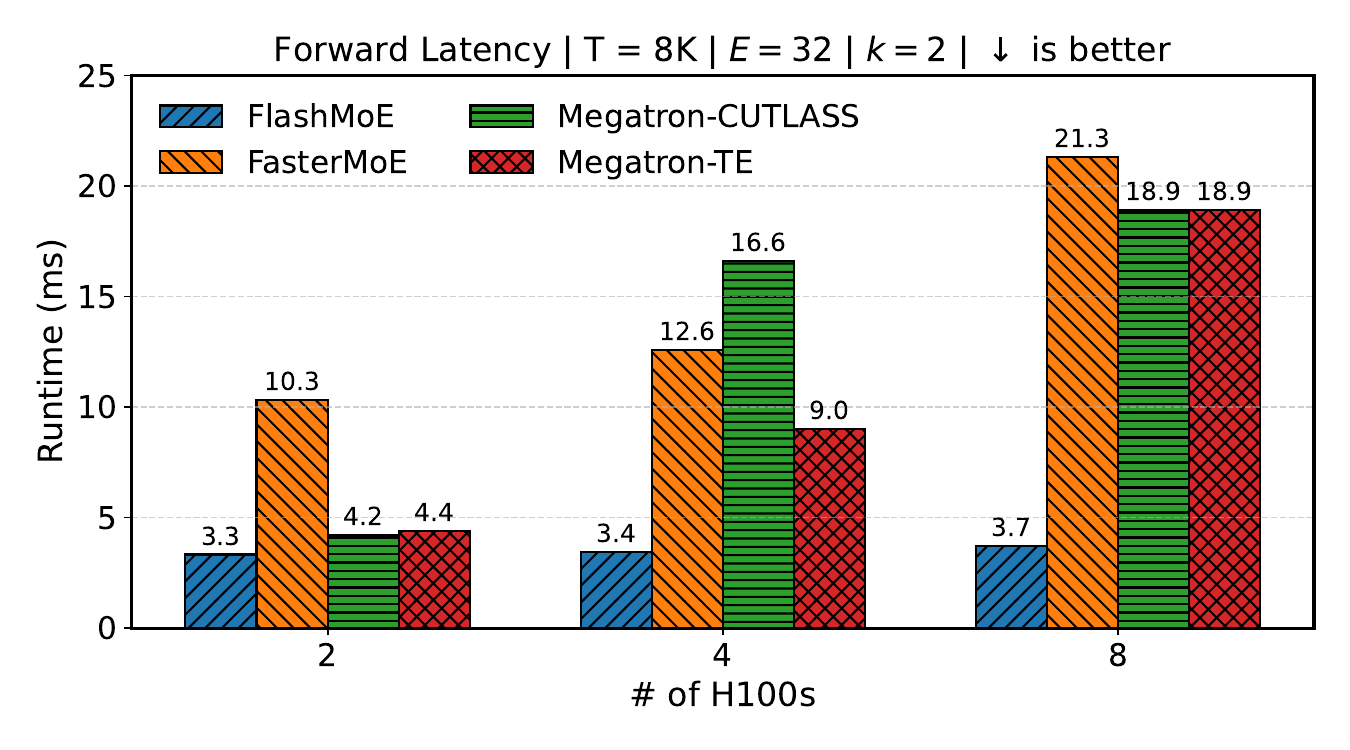}
        \caption{Weak Scaling across GPUs}
        \label{sub:e1}
    \end{subfigure}
    \begin{subfigure}{0.38\textwidth}
        \centering
        \includegraphics[width=\linewidth, keepaspectratio]{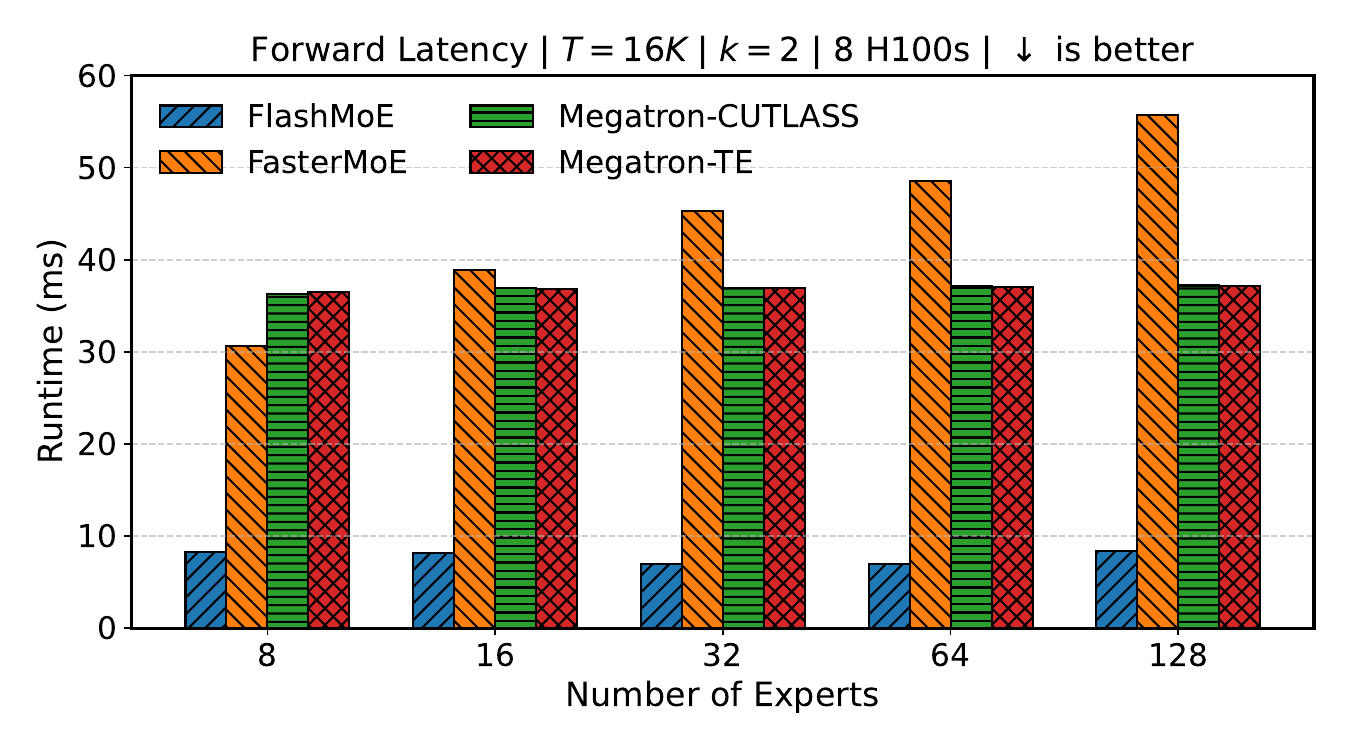}
        \caption{Across Experts}
        \label{sub:r2}
    \end{subfigure}
    \caption{\sysname performance.}
    \label{fig:str}
\end{figure}
\end{abstract}
    \section{Introduction}\label{sec:introduction}

\begin{wrapfigure}{r}{0.5\textwidth}
    \centering
    \includegraphics[width=0.51\textwidth, keepaspectratio]{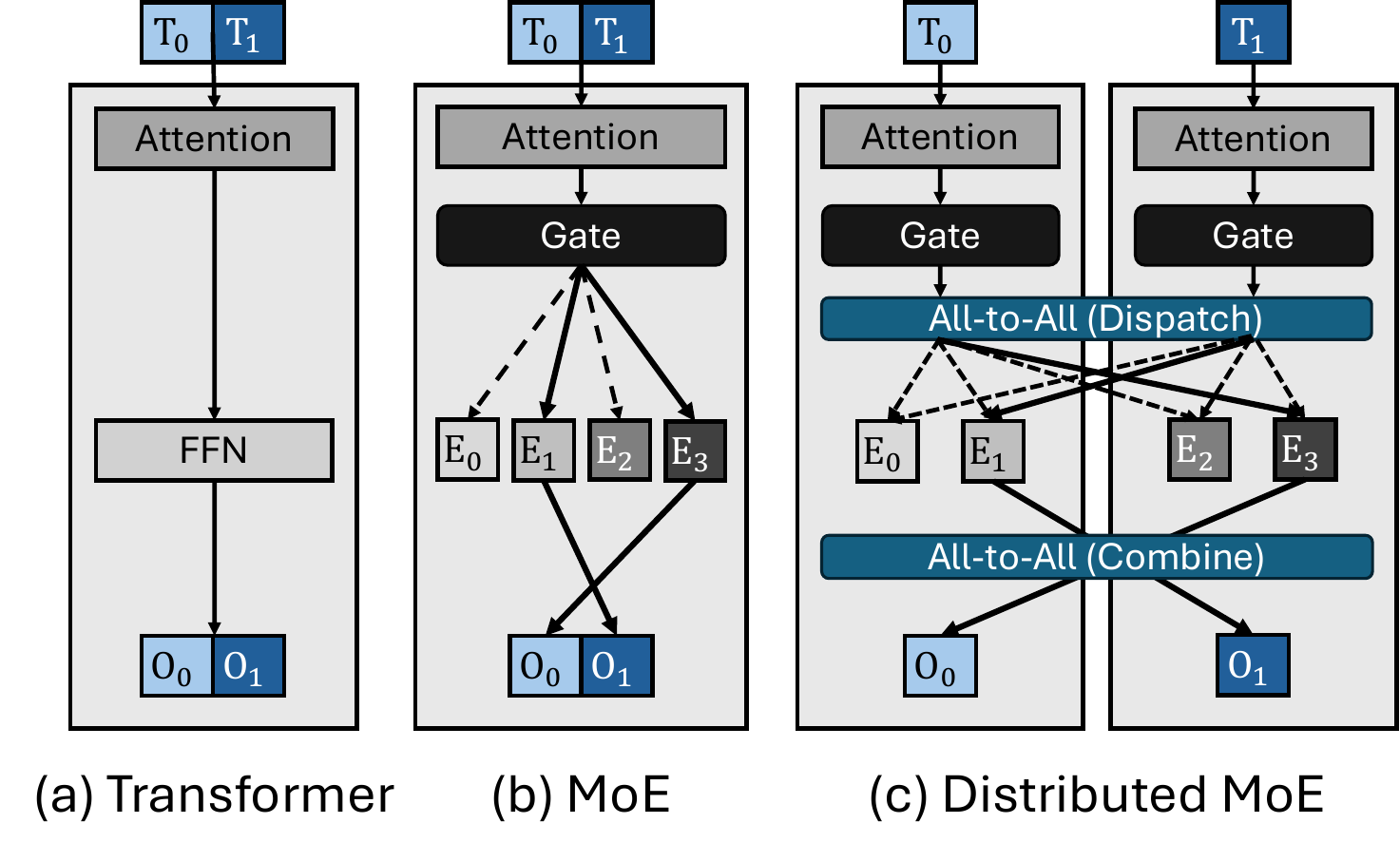}
    \caption{Transformer blocks (a) without MoE, (b) with MoE, and (c) with distributed MoE and expert parallelism.
    \texttt{T}, \texttt{E}, and \texttt{O} represent input tokens, experts, and output activations, respectively.}
    \label{fig:bg:moe}
    \vspace{-10pt}
\end{wrapfigure}

State-of-the-art large language models (LLMs)~\cite{deepep, llama4, dbrx, arctic, openai2025gptoss} have adopted the Mixture-of-Experts (MoE) architecture for its computational efficiency and strong performance across a range of tasks. The traditional Transformer block consists of a self-attention module followed by a dense feed-forward network (FFN)~\cite{NIPS2017_3f5ee243}. In contrast, MoE architectures replace this single FFN (Figure~\ref{fig:bg:moe}(a)) with many identically sized FFNs, known as experts (Figure~\ref{fig:bg:moe}(b)). A trainable neural network, known as a gate function, sparsely activates these experts by dynamically routing input tokens to the experts selected at runtime. This increase in model parameters due to more FFNs improves model quality without the corresponding increase in computational cost.

\myparab{Communication overheads in MoE.}  
As MoE model sizes grow, GPU memory constraints prevent hosting all experts on a single device. The standard practice is to distribute experts across multiple GPUs using expert parallelism (EP), which requires token routing via many-to-many communication primitives like \alltoall~\cite{deepep, arctic, dbrx, 10.1145/3577193.3593704} (Figure~\ref{fig:bg:moe}(c)). Another round of \alltoall communication restores the permuted tokens processed by experts to their original order in the sequence. \alltoall communication is challenging to optimize on GPU networks and is highly sensitive to straggler delays --- a phenomenon where a single straggler GPU delays all others from making progress~\cite{stragglar}. These communication operations can account for 68\% of the total runtime~\cite{10.1145/3603269.3604869, MLSYS2024_339caf45}, causing GPUs to remain idle (Figure~\ref{fig:intro}, top left).


\myparab{Kernel launch overheads in MoE.} 
To mitigate these communication bottlenecks, recent work pipelines computation with communication kernels (Figure~\ref{fig:intro}, left middle). However, the effectiveness of these solutions is limited by the overhead of launching many kernels from the CPU. 
Specifically, existing implementations~\cite{pmlr-v162-rajbhandari22a, comet, megatron, fastermoe} launch a large number of kernels per a single layer pass (see Table~\ref{tab:gpuOps}). Frequent kernel launches negatively affect performance by:
(1) creating non-deterministic kernel start times across GPUs, exacerbating straggler issues; (2) introducing unnecessary synchronization points, causing GPUs to wait on peers or the CPU before proceeding; and (3) incurring repeated global memory round trips at kernel boundaries. Although CUDA graphs~\cite{cuda_graphs_nvidia_blog} can partially mitigate the first issue in static workloads, they are incompatible with MoE's dynamic expert routing patterns. Addressing the remaining issues requires novel solutions,
which we provide in this work through complete kernel fusion and asynchronous device-initiated communication.

\begin{figure}[!ht]
    \centering
    \includegraphics[width=0.98\textwidth, keepaspectratio]{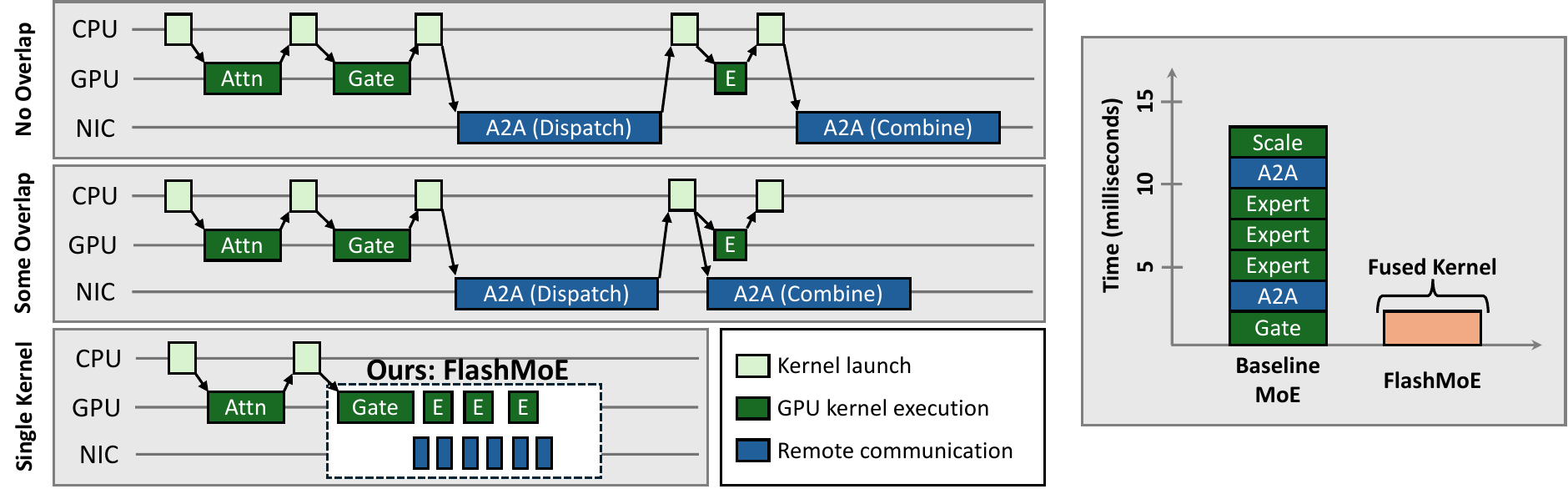}
    \caption{Comparing \sysname with state-of-the-art techniques that either do not overlap communication and computation (left, top) or do some overlap (left, middle). \sysname is a persistent kernel that fuses all computation and communication of the MoE operator (left, bottom). \sysname implements device-initiated computation (gate, expert FFN, scale) and communication tasks (right).}
    \label{fig:intro}
    \vspace{-10pt}
\end{figure}

\myparab{Our Contributions: distributed MoE in a single kernel.}
To overcome these fundamental inefficiencies in state-of-the-art MoE models, we develop \sysname a megakernel that integrates all MoE computation and communication tasks into a single persistent GPU kernel
\ie a kernel that remains active for the entirety of the MoE operator (Figure~\ref{fig:intro} bottom left). Instead of multiple kernel launches coordinated by the CPU, \sysname requires launching only one kernel,
significantly reducing the involvement of the CPU. Within the fused kernel, \sysname implements a reactive programming model to achieve
fine-grained parallelism and loosely coupled, non-blocking execution among tens of thousands of GPU threads.

\begin{wraptable}{r}{0.5\textwidth}
    \centering
    \small
    \renewcommand{\arraystretch}{0.9}
    \begin{tabular}{@{}lc@{}}
        \toprule
        \textbf{MoE Implementation} & \textbf{GPU Ops} \\ \midrule
        \sysname (this work) & 1 \\
        COMET~\cite{comet} & 33 \\
        Megatron-LM CUTLASS~\cite{megatron, 10.1145/3458817.3476209} & 85 \\
        Megatron-LM TE~\cite{megatron, 10.1145/3458817.3476209} & 261 \\
        Megatron-LM + DeepEP~\cite{deepep} & 432 \\
        DeepSpeedMoE~\cite{pmlr-v162-rajbhandari22a} & 550 \\
        \bottomrule
    \end{tabular}
    \vspace{1mm}
    \caption{
    We report number of GPU operations launched by MoE implementations by profiling with Nsight Systems~\cite{nsight-metrics}. We count operations in a single MoE layer (Gate $\rightarrow$ Dispatch $\rightarrow$ Expert $\rightarrow$ Combine) on 2 A100 GPUs, where each GPU has 32 experts. \sysname is the first to fully fuse the distributed MoE layer into a single GPU kernel.}
    \label{tab:gpuOps}
\end{wraptable}

\myparab{In-kernel Block scheduling and Tile parallelism.}
\sysname implements \emph{tile-level parallelism},
meaning it partitions input token matrices into smaller, independent units called \emph{tiles}, which are processed by blocks but managed (scheduled and constructed) by warps. We specialize every thread block, except one, as \emph{processors} to perform compute.
In addition, we designate a dedicated Operating System (OS) block (4 warps) to perform administrative tasks of
(1) scheduling computational work to processors (\emph{scheduler}), and (2) decoding computational tasks from messages received from other GPUs (\emph{subscriber}).
This design allows \sysname to dynamically assign tasks to GPU blocks based on \emph{readiness}, ensuring that no GPU SM remains idle throughout the lifetime of the MoE operator. \sysname selects tile dimensions to maximize GPU arithmetic intensity while benefitting from a high-degree of parallelism.

\myparab{Asynchronous and payload-efficient communication.}
By redesigning the MoE operator from the ground up,
\sysname resolves fundamental inefficiencies inherent in the conventional MoE execution pipeline. One notable inefficiency is token padding during communication.
To simplify programming complexity and due to symmetry constraints of collective communication APIs, existing implementations have to zero-pad token payloads to match predefined buffer sizes. This occurs when tokens are asymmetrically routed to experts, resulting in GPUs receiving much less than the expected capacity.
However, these null payloads waste communication bandwidth, bloat data transfer latency and may lead to
unnecessary computations on null matrices in some implementations. \sysname introduces \emph{payload-efficient} communication by sending non-padded tokens only to GPUs with actively selected experts, conserving both communication and computational resources.

\myparab{Technical challenges.}
Realizing the single-kernel design of \sysname required
solving several technical challenges to achieve high performance: (1) lightweight computational dependency management; (2) navigating optimal SM occupancy configurations; (3) implementing in-device BLAS operations; (4) minimizing inter- and intra-device synchronization overheads; (5) implementing transfer-awareness by leveraging DMA over Unified Virtual Addressing (UVA) when available. In addressing these challenges, \sysname's design presents a radical departure from traditional synchronous \alltoall collectives, where GPUs exhibit significant idle time during layer execution. For device-initiated communication, \sysname uses NVSHMEM~\cite{nvshm} to establish a global address space across all GPUs for Direct Memory Access (DMA) communication. For in-device BLAS, \sysname develops custom high-performance GEMM operations via CUTLASS~\cite{Thakkar_CUTLASS_2023}.

\myparab{Results.}
Our evaluations show that \sysname achieves \textbf{6}$\times$ latency speedup,
\textbf{9}$\times$ higher GPU utilization, \textbf{4}$\times$ better weak scaling efficiency and \textbf{5.7}$\times$
increased throughput compared to state-of-the-art implementations.
We project these performance gains becoming even better in multi-node scenarios,
where inter-node communication occurs using lower bandwidth inter-node links (\eg RDMA, Infiniband).

\section{Motivation}\label{sec:motivation}
\begin{figure}[!h]
    \centering
    \begin{subfigure}{0.425\textwidth}
        \centering
        \includegraphics[width=\linewidth, keepaspectratio]{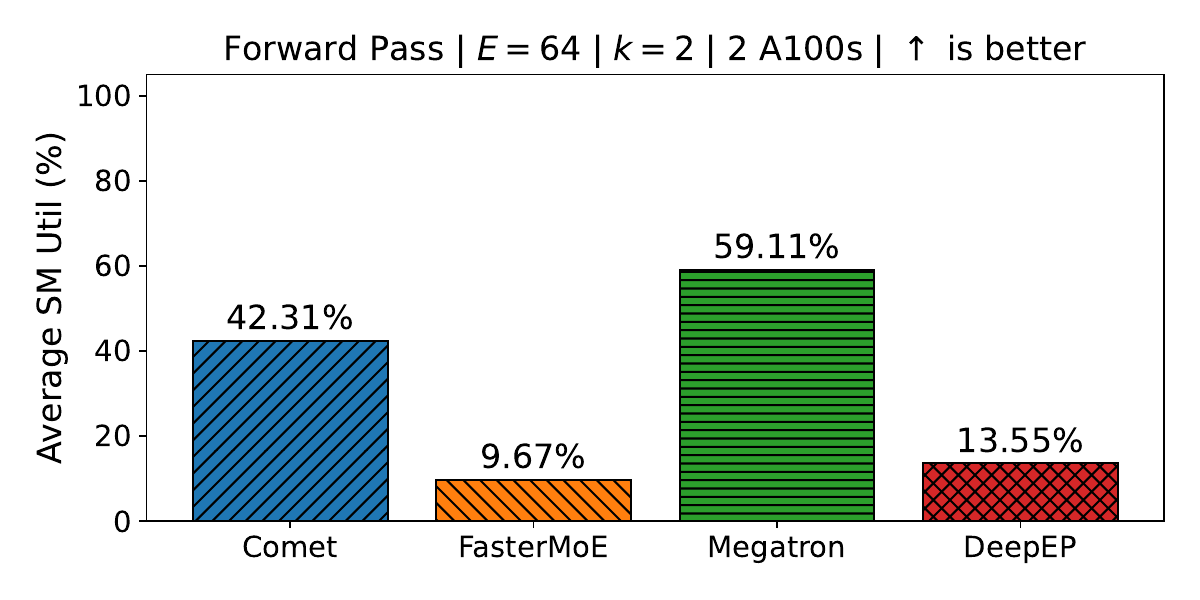}
        \caption{GPU SM Utilization across baselines}
        \label{sub:util}
    \end{subfigure}
    \begin{subfigure}{0.425\textwidth}
        \centering
        \includegraphics[width=\textwidth, keepaspectratio]{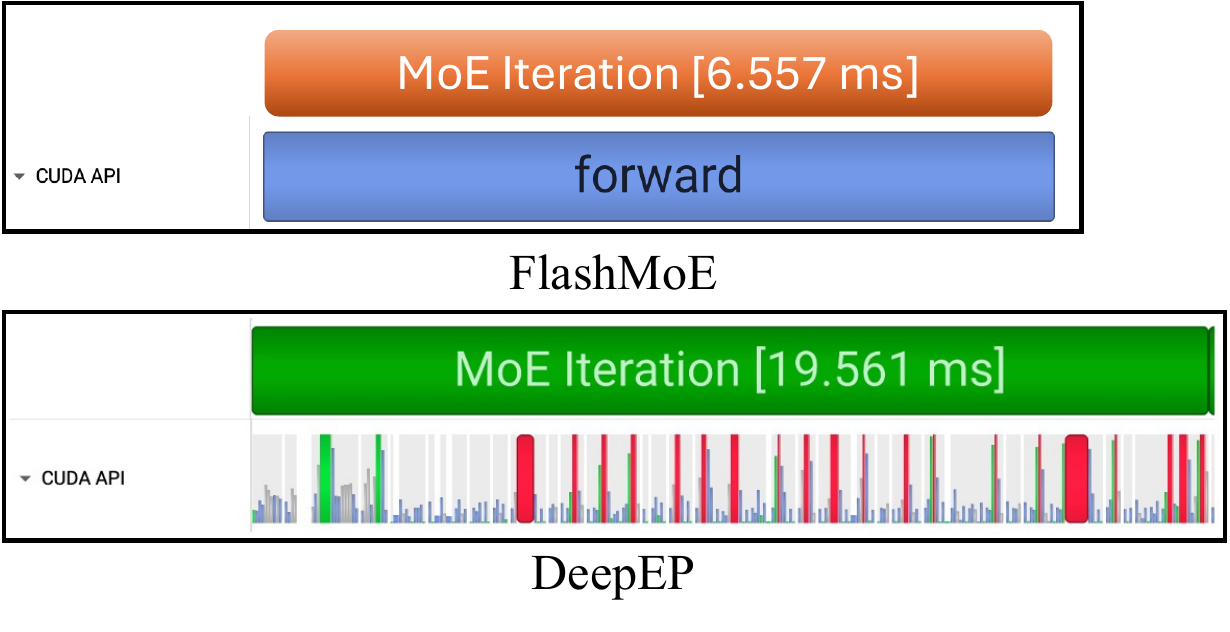}
        \caption{Kernel Launch overhead (CUDA API row)}
        \label{sub:launch}
    \end{subfigure}
    \caption{\ref{sub:util} shows GPU utilization averaged across 100 MoE forward passes on 2 A100s with
    300 GB/s unidirectional bandwidth, where we observe up to 90\% idle time, due to kernel launch gaps and
    non-overlapping communication.}
    \label{fig:kl}
\end{figure}
\myparab{Synchronous Communication.} \emph{AlltoAll} or \emph{AllGather} communication as currently used in MoE frameworks
is a \emph{synchronous} collective operation, whose completion requires the participation of all involved GPUs.
Here, disparities in processing speeds or kernel scheduling
among GPUs induce a straggler effect detrimental (Figure~\ref{fig:overlap}) to (1) the collective operation's performance and (2)
E2E performance, as stalled GPUs cannot proceed to downstream dependent or independent tasks until the collective terminates.
We expound on this problem in~\S\ref{sec:motivation-appx}.

\myparab{Kernel Launch Overhead.} We compare the kernel launch overheads between \sysname and existing baselines.
Table~\ref{tab:gpuOps} shows the number of kernel launches during a single forward pass: \sysname launches exactly one persistent kernel, while the baselines launch up to 550 short-lived kernels to perform the same computation.
Figure~\ref{fig:kl} provides a visual comparison using CUDA API traces captured by NSight Systems, illustrating the difference between \sysname and DeepEP.
DeepEP exhibits many small CUDA API calls, with frequent stalls between individual operators, leading to increased GPU idle time (Figure~\ref{sub:util}).
In contrast, \sysname maintains high GPU utilization by avoiding launch overhead and synchronization gaps—achieving \textbf{93.17}\%
GPU utilization compared to 14\% for DeepEP. See \S\ref{sec:evaluation} for experimental results and \S\ref{sec:related}
for a discussion of related work.
    \section{Fused MoE Kernel Design}\label{sec:method}
\begin{wrapfigure}{r}{0.6\textwidth}
    \centering
    \includegraphics[width=0.6\textwidth, keepaspectratio]{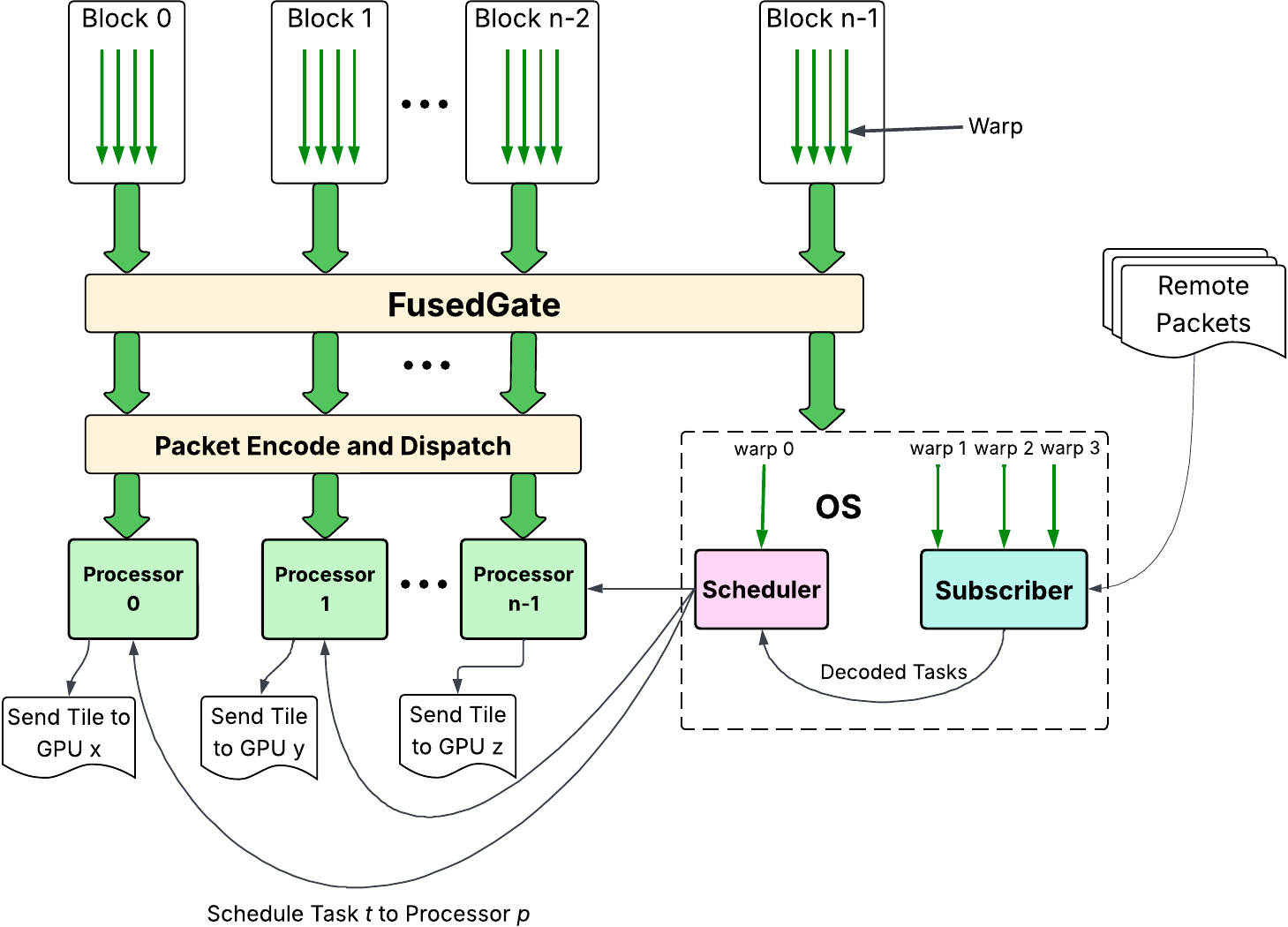}
    \caption{\emph{\sysname Fused Kernel}}
    \label{fig:fusedK}
\end{wrapfigure}
Modern distributed MoE systems suffer from two limitations: (1) frequent many-to-many
(\emph{AlltoAll or AllGather}) collectives on the critical path, and
(2) significant overhead from repeated kernel launches.
We address these in \sysname, a fully fused MoE operator implemented
as a single persistent GPU kernel.
Unlike previous approaches~\cite{comet, deepep, pmlr-v162-rajbhandari22a, megatron, MLSYS2023_5616d34c,
    MLSYS2024_339caf45, 10.1145/3503221.3508418, 10.1145/3588964, 10.1145/3627703.3650083, 10.1145/3710848.3710868,
    NEURIPS2022_67d57c32},
\sysname is the first solution to implement a \emph{completely fused Distributed MoE kernel},
eliminating kernel launch overhead entirely by requiring only a single kernel launch (see Table~\ref{tab:gpuOps}).
\SetKwInput{KwRequire}{Require}
\SetKwInput{KwResult}{Result}
\SetKwInput{KwInput}{Input}
\SetKw{kwAnd}{and}
\SetKw{kwOr}{or}
\SetKw{kwTrue}{True}
\SetKw{kwFalse}{False}
\begin{algorithm}[!h]
    \small
    \DontPrintSemicolon
    \caption{~\emph{\sysname Distributed MoE Fused Kernel}}\label{alg:one}
    \KwInput{$A, O \in \mathbb{R}^{S\times H},\; X \in \mathbb{R}^{E\times H \times D},\; N$}
    \Begin{
        $T_{\phi}, G_{\phi} \gets \mathbf{FusedGate}(A)$\;
        \eIf{$\text{blockId} + 1 < N$}{
            $\mathbf{Dispatch}(T_{\phi}, A)$\;
            processor::start()\;
        }{
            \eIf{$warpID == 0$}{
                scheduler::start()\;
            }{
                subscriber::start($T_{\phi}$, $G_{\phi}$, $O$, $X$)\;
            }
        }
    }
\end{algorithm}
\begin{figure}[!ht]
    \centering
    \includegraphics[width=\linewidth]{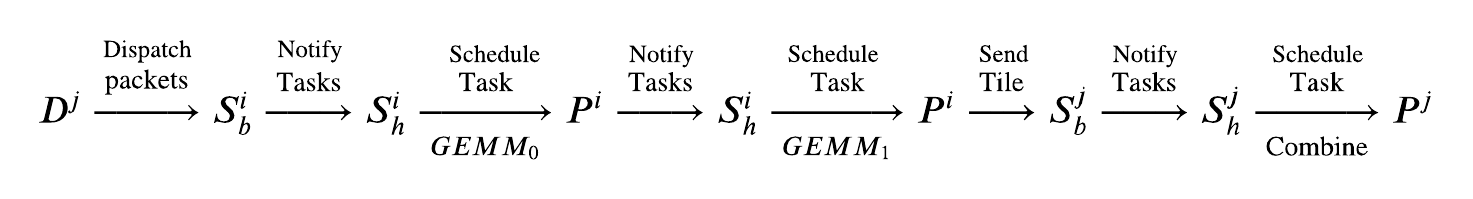}
    \caption{\emph{DMoE Functional Dependencies Expressed as a Chain of Actor Interactions}.
    We denote $S_b$, $S_h$, and $P$ as the
    Subscriber, Scheduler and Processor actors, respectively. For any actor $a \in \{S_b,\>S_h,\>P\}$,
        $a^i$ identifies an actor on GPU $i$. We define $D^j_i$ as the operator,
        where GPU $j$ dispatches packets of tiles to GPU $i$.
        This diagram expresses task dependencies at the granularity of a tile, namely
        $GEMM_0$, $GEMM_1$, combine and communication produce an output tile.
        Notifications occur as signals propagated through shared memory (subscriber $\leftrightarrow$ scheduler) or
        global memory (scheduler $\leftrightarrow$ processor or inter-GPU communication). Note one-sided
        inter-GPU transfers (packet or single tile) are \emph{coupled} with a signal to
        notify $S_b^j$ on the receiving GPU $j$ of the message's delivery.
    }
    \label{fig:actors}
    \vspace{-4mm}
\end{figure}

\myparab{Actor-based model.}
The design of \sysname is based on the actor model of concurrent
computation~\cite{agha:85, 10.5555/1624775.1624804, Greif:75}.
We implement this model by specializing GPU thread blocks and warps into three distinct actor roles:
(1) \textbf{Processor} (\S\ref{subsec:processor}), (2) \textbf{Subscriber} (\S\ref{subsec:subscriber}),
and (3) \textbf{Scheduler}(\S\ref{subsec:scheduler}).
The Processor performs compute (GEMMs and element-wise operations) and tile communication.
We use CUTLASS~\cite{Thakkar_CUTLASS_2023} as the underlying infrastructure for high-performance
BLAS routines and NVSHMEM for kernel-initiated communication~\cite{nvshm}.
The Subscriber and Scheduler perform administrative functions.
Specifically, the Scheduler assigns computational tasks to available thread blocks.
Our key innovation is making the Scheduler both \emph{multithreaded},
enabling high scheduling throughput, and \emph{work-conserving}, ensuring consistently high GPU SM utilization.
On the other hand, the Subscriber decodes \emph{tile packets} from peer GPUs to task descriptors
(\S\ref{subsec:task-abstraction-for-computation}).
Of the $N$ thread blocks on a GPU, we specialize $N-1$ to adopt the \textbf{Processor} role.
We specialize the last block as the Operating System (OS).
Within this block, we specialize three warps for the \textbf{Subscriber} role and
one warp for the \textbf{Scheduler} role.
This split of thread blocks across actors is intentional: our goal is to use few resources for administrative
tasks while reserving bulk of the resources for performing MoE computation tasks.
Figure~\ref{fig:fusedK} summarizes the
\sysname architecture and its constituent actors, while Algorithm~\ref{alg:one} gives a very close translation of the
system in code.
Note that $A \in \mathbb{R}^{S \times H}$ is the input token matrix;
$O \in \mathbb{R}^{S \times H}$ the output matrix;
and $X \in \mathbb{R}^{E\times H \times D}$ is a 3-D tensor of expert weights,
where $E$ denotes the number of local experts for the executing GPU, $H$ is the embedding dimension,
$D$ is the FFN intermediate dimension and $S$ is the sequence length.
$T_{\phi} \in \left(\mathbb{N}\times\mathbb{R}\right)^{E \times C}$
is a routing table data structure, where $T_{\phi}\left( e, c\right) = (i, w)$ indicates that token $i$ at slot $c$
dispatches to expert $e$. $w$ is the combine weight (Equation~\ref{eq:combine1}) and $C$ is expert capacity.
The tuple structure of $T_{\phi}$ is an implementation detail. $G_{\phi} \in \mathbb{R}^{S \times E}$ captures
the affinity scores produced by the gate (Equation~\ref{eq:combine2}).
\myparab{Inter-actor interactions in \sysname.}
\sysname decomposes MoE computation and communication at the granularity of a tile, a statically sized partition of a tensor,
to achieve parallel execution and efficient overlap of tasks.
Each tile maps to a discrete unit of work encapsulated by a \emph{task descriptor}.
The \textbf{Subscriber} decodes these task descriptors from the remote tile packets it receives.
Concurrently, the \textbf{Scheduler} receives notifications about available tasks and dispatches them for execution
to \textbf{Processor} actors that perform computations defined by these tasks,
namely the feed-forward network (FFN) and expert-combine operations.
Figure~\ref{fig:actors} show the chain of actor interactions, demonstrating how \sysname
enforces DMoE functional dependencies.

\myparab{Determining tile dimensions in \sysname.}
Selecting appropriate tile dimensions in \sysname is crucial to ensure efficient GPU utilization.
An undersized tile underutilizes the GPU,
while excessively large tiles create register pressure,
causing performance-degrading register spills to local memory.
After careful parameter sweeps,
we choose tile dimensions of (128, 64).
Our key insights are: increasing tile width significantly raises the register usage per thread,
potentially triggering costly spills;
increasing tile height without adjusting thread count increases workload per thread, harming performance.
Raising the thread count per block beyond our fixed value of 128 threads reduces the number of concurrent blocks,
negatively affecting SM occupancy.
Larger thread-block sizes also increase overhead from intra-block synchronization (\emph{\_\_syncthreads()} barriers),
further degrading performance.
Thus, our chosen tile dimensions balance register usage, shared-memory constraints,
and GPU occupancy to deliver optimal performance.

\subsection{Task Abstraction for Computation}\label{subsec:task-abstraction-for-computation}

\myparab{Computational operators.}
The FFN operator is a standard position-wise feed-forward network widely used in Transformer architectures~\cite{NIPS2017_3f5ee243}, composed of two linear transformations separated by a nonlinear activation $\phi$ (e.g., GELU or ReLU):

\begin{equation}\label{eq:ffn}
    \small
\textrm{FFN}(x) = W_2 \cdot \phi(x W_1 + b_1) + b_2
\end{equation}

Here, $W_1$ and $W_2$ represent learnable weight matrices, and $b_1$ and $b_2$ are biases.
The expert-combine operation, used in architectures like GShard~\cite{DBLP:conf/iclr/LepikhinLXCFHKS21} and DeepSeek~\cite{deepep}, merges outputs from multiple experts by computing a weighted combination based on their affinity scores:
\begin{equation}\label{eq:combine1}\small
\mathcal{C}_i = \sum\limits_{j = 1}^k g_{i, e}
\end{equation}
\begin{equation}\label{eq:combine2}\small
\mathbf{h}_i = \sum\limits_{j = 1}^k \frac{g_{i, e}}{\mathcal{C}_i}\cdot \mathbf{h}_i^k
\end{equation}

In these equations, $i \in {0, S - 1}$ represents an input token index, $e = E_{i,k}$ identifies the $k$-th expert selected for token $i$, and $g_{i,e}$ is the affinity score indicating how relevant expert $e$ is for token $i$.

\myparab{Unified task abstraction.}
We unify the FFN and combine operations under a common abstraction called a \emph{task}. Tasks provide a uniform interface for communicating tile-level work among Subscribers, Schedulers, and Processors. Formally, a task descriptor $t \in \mathcal{T}$ is defined as a tuple:
\[\small
    t = (\mathcal{M}, \star, \phi)
\]

where $\mathcal{M}$ is a set of metadata (\eg  device ID, tile index), $\star$ is a binary tensor operation (specifically, matrix multiplication $\cdot$ or Hadamard product $\odot$), and $\phi$ is an element-wise activation function (e.g., ReLU or identity). 

We define a task $t$ operating on input tensors $A$, $B$, $D$, producing output tensor $C$, as follows:
\begin{equation}\label{eq:task_def}\small
    \mathcal{F}_t(A, B, C, D) \coloneqq C \gets \phi\left(A \star_t B + D\right)
\end{equation}

The operator $\star_t$ (instantiated from $\star$) may behave differently depending on the task metadata $\mathcal{M}$, and the result of $A \star_t B$ is accumulated into $D$. We provide an example of task metadata in~\S\ref{sec:task-implementation}.

In practice, we implement each task defined by Equation~\ref{eq:task_def} as a \emph{single fused} \verb|__device__|
decorated function which the \textbf{Processor} (Algorithm \ref{alg:processor}) invokes at runtime.
Fusion for $t$ entails applying $\phi$ and the succeeding addition operation to registers
storing the results of the binary operator $\star_t$.
To illustrate its flexibility, we show how the FFN and expert-combine operations can be expressed
using this task framework.
Note that we omit the matrix multiplication symbol ($\cdot$) for simplicity.
Also, $\phi_1$ can be any activation function, while $\phi_2$ is the identity function.
The FFN is expressed as:
\begin{gather*}\small
    t_1 = (\mathcal{M}, \cdot, \phi_1), \quad t_2 = (\mathcal{M}, \cdot, \phi_2), \\ \small
    \mathcal{F}_{t_1}(A, B_1, C_1, D_1) \coloneqq C_1 \gets \phi_1\left(A B_1 + D_1\right), \\ \small
    \mathcal{F}_{t_2}(C_1, B_2, C_2, D_2) \coloneqq C_2 \gets \phi_2\left(C_1 B_2 + D_2\right).
\end{gather*}
Whereas, the expert-combine operation is formalized as:
\begin{gather*}\small
    t_3 = (\mathcal{M}, \odot, \phi_2), \\ \small
    \mathcal{F}_{t_3}(A, S, C, C) \coloneqq C \gets \phi_2\left(A \odot S + C\right).
\end{gather*}
\subsection{Symmetric Tensor Layout for Inter-GPU Communication}\label{subsec:symmetric-tensor-layout}
\begin{figure}[!ht]
    \centering
    \begin{subfigure}{0.55\textwidth}
        \centering
        \includegraphics[width=\linewidth, keepaspectratio]{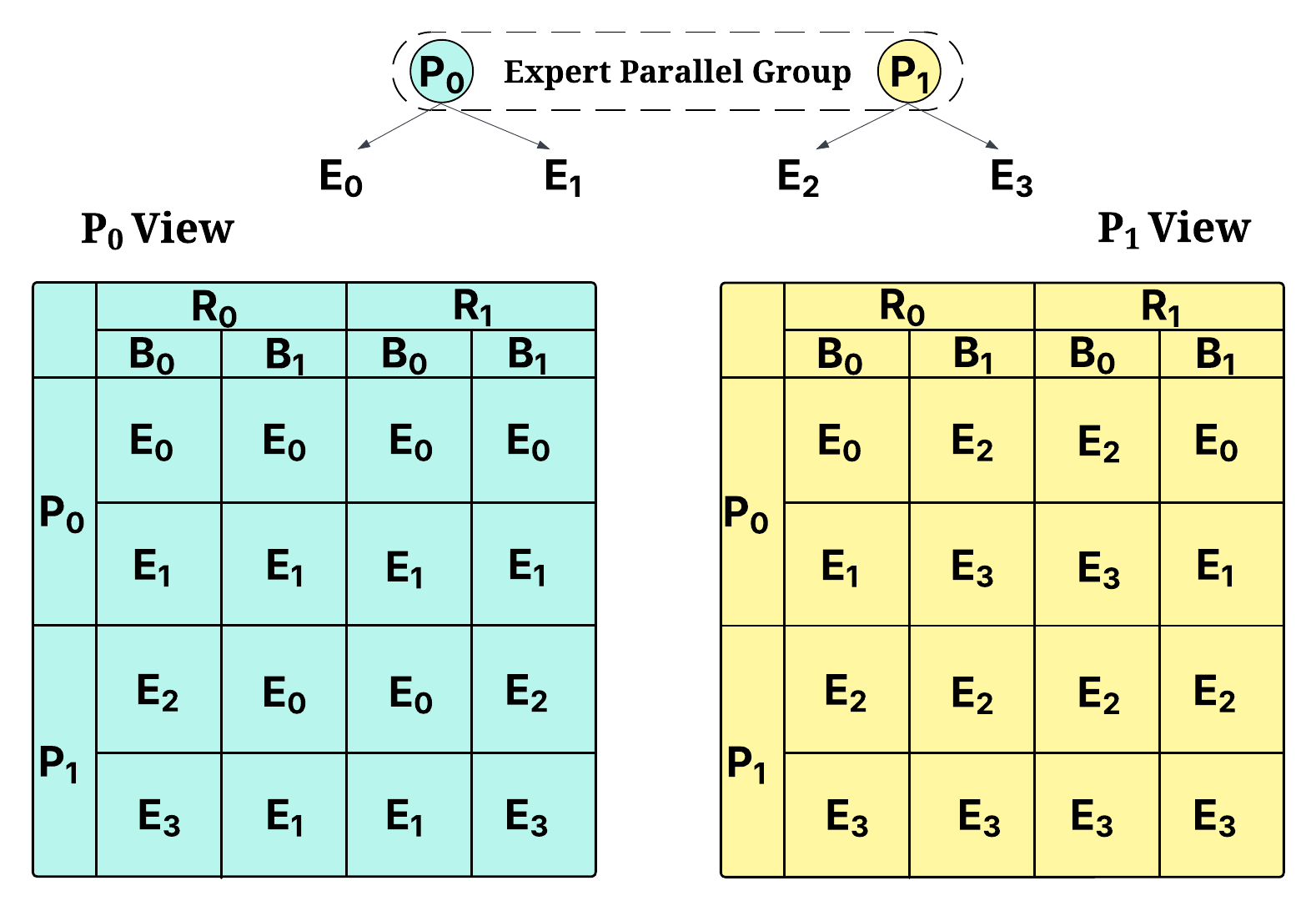}
        \caption{\emph{Layout across 2 Expert-parallel Processes}.}
        \label{fig:mem_layout}
    \end{subfigure}
    \begin{subfigure}{0.325\textwidth}
        \centering
        \includegraphics[width=\linewidth, keepaspectratio]{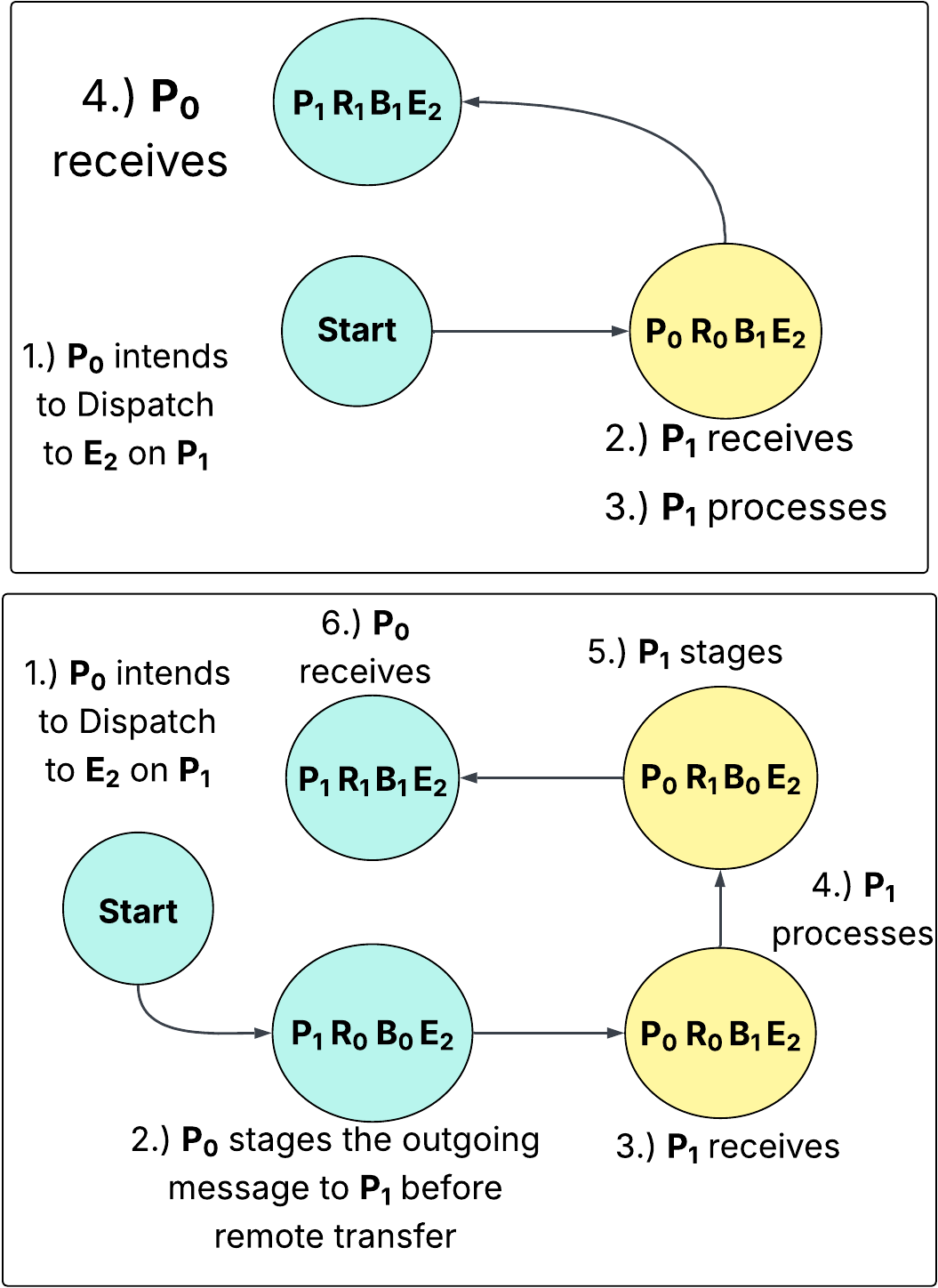}
        \caption{\emph{State machine for DMA (top) and RDMA (bottom) communication.}}
        \label{fig:sm}
    \end{subfigure}\caption{Symmetric Tensor Layout}
    \label{fig:symmt}
\end{figure}

Within a single GPU device, the actors in \sysname communicate through the GPU's memory subsystem.
Specifically, the Scheduler and Subscriber actors exchange data via fast shared memory, while other actor pairs
communicate through global memory.
For communication across multiple devices, \sysname uses \emph{device-initiated communication},
leveraging the one-sided PGAS (Partitioned Global Address Space) programming model~\cite{10.1145/1278177.1278183}.
However, achieving scalable and correct one-sided memory accesses in PGAS without costly synchronization
is a known challenge~\cite{deepep, triton-dist}.
We address this challenge with a provably correct and scalable solution: a symmetric tensor layout $L$,
supporting fully non-blocking memory accesses.
We define L as:\\
\[
    L \in \mathbb{R}^{P\times R \times B \times E \times C \times H}
\]
where: $P$ is the expert parallel world size, $R$ identifies communication rounds (\ie two rounds,
one for token dispatch and one for combine), $B$ is number of staging buffers,
$E$ is the number of local experts, $C$ is the upscaled expert capacity (\S\ref{subsubsec:payload})
and $H$ is the token embedding dimension.
Our core insight to enable non-blocking communication is \emph{temporal buffering}.
Specifically, we overprovision memory for the underlying token matrix by at least $2 \cdot r$ times, where $r$
is the number of communication rounds in the dependency graph, and the factor of $2$ accounts for
separate buffers for incoming and outgoing data within each communication round.
For MoE models, we have $2 \cdot r = 4$.
This modest increase in memory usage eliminates the need for synchronization during one-sided data transfers.
Figure~\ref{fig:sm} illustrates how cells within this symmetric tensor layout are indexed
and used for Direct Memory Access (DMA) and Remote DMA (RDMA) operations.
As Theorem~\ref{theorem:ww} reinforces,
this indexing scheme over $L$ is the underlying mechanism that allows for fully non-blocking accesses eliding
synchronization because all accesses are write \emph{conflict-free}.
See\S~\ref{sec:proof-of-theorem} for the proof.
\begin{theorem}\label{theorem:ww}
   The symmetric tensor layout $L$ is write-write conflict-free.
\end{theorem}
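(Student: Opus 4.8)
The plan is to reduce the statement to an injectivity claim about an explicit addressing function. First I would make ``write--write conflict-free'' precise: the memory footprint written by any one write operation issued during an execution of the \sysname kernel is disjoint from the footprint written by \emph{any} other write that is concurrently in flight; equivalently, there is no pair of distinct, simultaneously-live writes whose target cells overlap. Since every write to $L$ writes an entire token embedding, i.e., the full $H$-axis of one cell, as a single contiguous and naturally aligned block, two writes alias if and only if they agree on all five leading coordinates $(p,r,b,e,c)$ and on the destination PE. So it suffices to build a map $\alpha$ sending each write event to $(\mathrm{dst},p,r,b,e,c)$ and show $\alpha$ is injective on the set of concurrently-live writes. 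I would then also argue, from the actor chain in Figure~\ref{fig:actors}, that the only operations that write $L$ are (i) \emph{dispatch} puts, in which a GPU stages/pushes a routed token's embedding toward the GPU owning the selected expert, and (ii) \emph{combine} puts, in which the GPU that evaluated an expert pushes the processed token back toward the GPU that originally held it; purely local Processor scratch does not live in $L$, and the delivery signals ``coupled'' with each transfer are separate scalar flags outside the token region, so neither can alias a token write.

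\textbf{Injectivity by cases.}
Next I would read off, for each such write $w$: $\mathrm{dst}(w)$, the PE whose symmetric copy of $L$ is targeted; $p(w)$, the peer id of the transfer (the sender for an incoming cell, the intended receiver for an outgoing staging cell); $r(w)\in\{\text{dispatch},\text{combine}\}$; $b(w)$, the staging-buffer index; and $(e(w),c(w))$, the (local-expert, capacity-slot) pair assigned to that token by the gate/dispatch stage. Injectivity of $\alpha$ then follows by cases: if $\mathrm{dst}(w_1)\neq\mathrm{dst}(w_2)$ the writes land in different PEs' symmetric heaps and are trivially disjoint; otherwise $r$ separates the dispatch round from the combine round, so we may assume both writes share a round (combine is symmetric to dispatch under swapping ``owner'' and ``evaluator''); with the round fixed, $b$ separates cells written by peers (incoming) from cells written by the local PE itself (outgoing), so we may assume equal $b$; with $\mathrm{dst},r,b$ fixed, $p$ is exactly the identity of the GPU performing the write, so two writes by different GPUs differ in $p$; and for two writes by the \emph{same} GPU with $\mathrm{dst},r,b,p$ all equal, $(e,c)$ is the slot the dispatch stage assigned to the token, so the claim reduces to: the slot-assignment procedure is a partial injection from tokens into $\{0,\dots,E-1\}\times\{0,\dots,C-1\}$ that never overflows $C$. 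I would establish this from the construction of $T_{\phi}$ — per-(destination, expert) atomic slot counters truncated at the upscaled capacity of \S\ref{subsubsec:payload} — which hands distinct slots to distinct tokens and, thanks to the upscaling, keeps each counter within the $C$ bound.

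\textbf{Main obstacle.}
The coordinate bookkeeping above is routine; the real content lies in two places, and I expect the harder to be ruling out aliasing \emph{across communication epochs} — e.g., the dispatch of one iteration overlapping a transfer of the previous one that is still draining. This is not a pure-combinatorics step: it requires showing, from the \sysname dependency graph (Figure~\ref{fig:actors}) together with the $2r$-deep temporal overprovisioning, that at most $B$ epochs are ever simultaneously live, so that the epoch-derived choice of $b$ genuinely keeps their footprints apart — a write of a new epoch into buffer slot $b$ cannot be issued until every write \emph{and read} of the epoch currently occupying that slot has completed, because the reactive schedule makes the new epoch's producer transitively dependent, through the combine/residual path, on the old epoch's consumer. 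The second, lesser, subtlety is the non-overflow half of the slot-assignment lemma, which is exactly where the \emph{upscaled} capacity $C$ is needed rather than the nominal one. Once both are in hand, $\alpha$ is injective on all concurrently-live writes, their footprints are pairwise disjoint, $L$ is write--write conflict-free, and as a corollary the one-sided DMA/RDMA transfers require no write-side barriers or locks.
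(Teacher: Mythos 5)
Your core mechanism is the same as the paper's: both proofs hinge on the fact that the leading coordinate of a valid index into $L$ is pinned to the identity of the writing process (the paper's "validity" condition forces $p^\ast = p_s$ for inter-device writes and $p_s = p_t$ for staging writes with $b=0$), so two distinct sources can never address the same cell. Where you diverge is in scope, and the divergence is instructive. The paper \emph{defines} a write--write conflict to require $p_{s_1} \neq p_{s_2}$, so its proof is a three-line case split: same-source pairs are excluded by definition, staging writes land in physically distinct per-process buffers, and remote writes from distinct sources get distinct $p$ coordinates by validity. Consequently the two issues you flag as the real content --- injectivity of the $(e,c)$ slot assignment for two writes issued by the \emph{same} GPU, and aliasing across communication epochs --- are simply outside the paper's formalization: the first is vacuous under its definition of conflict, and the second is absorbed into the fixed $r$ and $b$ dimensions without any liveness argument about how many epochs can be simultaneously in flight. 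Your version proves a strictly stronger statement (pairwise disjointness of the footprints of \emph{all} concurrently live writes, including same-source pairs and cross-epoch interleavings), at the cost of needing the slot-counter injectivity lemma and the $2r$-buffering liveness argument, neither of which the paper supplies. Both of those are plausible given the construction of $T_{\phi}$ and the dependency chain in Figure~\ref{fig:actors}, but be aware you are taking on proof obligations the paper's narrower definition deliberately avoids; if you adopt the paper's definition, your case analysis collapses to essentially its argument.
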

To construct $L$, we start from the original token buffer $T \in \mathbb{R}^{S \times H}$, where $S$ is the
sequence length and $H$ is the token embedding dimension.
We first reorganize the sequence dimension $S$ into three sub-dimensions representing the expert capacity ($C$),
local expert slots ($E$), and the expert parallel world size ($W$), st:
\[
C \cdot E \cdot W = C \cdot E' = S', \quad\text{where}\quad S' \geq S \text{ and } E' \geq E_W
\]
In the typical case of uniform expert distribution (illustrated in Figure~\ref{fig:mem_layout}),
we have $S' = S$ and $E' = E_W$, where $E_W$ is the total number of experts in the model.
Thus, the size of the token buffer is $Size(T) = S' \cdot H$.
In Figure~\ref{fig:mem_layout}, each cell labeled $E_i$ (with $i \in \{0,\ldots,3\}$) is a matrix of size $(C, H)$.
Extending prior work~\cite{DBLP:conf/iclr/LepikhinLXCFHKS21, comet}, we introduce additional temporal dimensions
$R$ (communication rounds) and $B$ (staging buffers).
Each communication round has two fixed staging slots: one for outgoing tokens and another for incoming tokens.
Each slot, indexed by dimension $P$, forms a tensor of shape $(S', H)$.
Therefore, the tensor size $Size(L)$ is generally at least four times the original token buffer size,
becoming exactly four times larger in the case of uniform expert distribution.
Empirically, we find $Size(L) \approx 4 \cdot Size(T)$, contributing memory overhead $\leq 2$\% of memory capacity for
inference of popular models.
We present a thorough breakdown in~\S\ref{sec:eval:memory}.

\subsubsection{In-place Padding for Payload Efficiency}\label{subsubsec:payload}
Due to the dynamic and uneven distribution of tokens in MoE dispatch~\cite{bmamba}, GPUs commonly
receive fewer tokens than their predefined expert capacity.
Current MoE frameworks~\cite{pmlr-v162-rajbhandari22a} typically pad these buffers with null tokens before computation,
unnecessarily increasing communication payloads and degrading performance.
In contrast, we propose \emph{in-place padding}, performing padding directly within the local
symmetric tensor buffers and thus eliminating excess network communication.

As we show in Figure~\ref{fig:mem_layout} as a reference, each cell $E_i$ is sized according to the expert capacity $C$.
We further align this capacity to ensure divisibility by the tile block size $bM = 128$,
guaranteeing safe and aligned memory reads by Processor threads consuming remote tokens.
This in-place padding strategy slightly increases the memory footprint of $L$, as described below:
\[
    Size(L) \approx \begin{cases}
        4 \cdot Size(T), & \frac{S}{E} \geq bM \\[1ex]
        4 \cdot \frac{bM \cdot E}{S} \cdot Size(T), & \text{otherwise}
    \end{cases}
\]

    \section{Evaluation}
\label{sec:evaluation}
We implement (\S\ref{sec:implementation}) and evaluate \sysname across
five metrics: \textbf{Forward Latency} (\S~\ref{subsec:forward-latency}),
\textbf{GPU Utilization} (\S~\ref{subsec:gpu-utilization}),
\textbf{Overlap Efficiency} (\S~\ref{subsec:overlap-efficiency}),
\textbf{Throughput} (\S~\ref{subsec:throughput}), and \textbf{Expert Scalability} (\S~\ref{subsec:expert-scalability}).
We run experiments on a server with 8 NVIDIA H100 80G GPUs interconnected via NVLink,
125 GB of RAM, and 20 vCPUs. We used PyTorch 2.6.0, CUDA 12.8, and Ubuntu 22.04.
All experiments use MoE transformer models configured with 16 attention heads,
an embedding dimension of 2048, and an FFN intermediate size of 2048.
We apply Distributed Data Parallelism (DDP) and Expert Parallelism for all experiments.
We execute only the forward pass over a single MoE layer and measure the average runtime
of 32 passes after 32 warmup passes.
We use top-2 routing with a capacity factor of 1.0.
We compare \sysname against several state-of-the-art MoE systems:
(1) \textbf{Comet}~\cite{comet}, 
(2) \textbf{FasterMoE}~\cite{fastermoe}, 
(3) \textbf{Megatron-CUTLASS}~\cite{megatron}, and
(4) \textbf{Megatron-TE}: Megatron-LM with Transformer Engine~\cite{transformer-engine}.
Comet relies on~\verb|cudaMemcpyPeerAsync|~\cite{fluxp2p}, while FasterMoE and Megatron-LM use NCCL exclusively for communication.

\myparab{Desiderata.}
We observe Comet exhibiting anomalously bad performance values at 8 GPUs,
so we exclude their results from evaluations at 8 GPUs and only include for results at $\leq$
4 GPUs. We evaluate \sysname using FP32 precision whereas all baselines use FP16.
We do so because (1) of incomplete fp16 tuning (\S\ref{sec:fp16-memory-throughput})
and (2) no baseline supports FP32.
Note, this precision discrepancy disadvantages \sysname by doubling its
communication volume and computational workload.
\subsection{Forward Latency}\label{subsec:forward-latency}
\begin{figure}[!h]
    \centering
    \begin{subfigure}{0.49\textwidth}
        \centering
        \includegraphics[width=\linewidth, keepaspectratio]{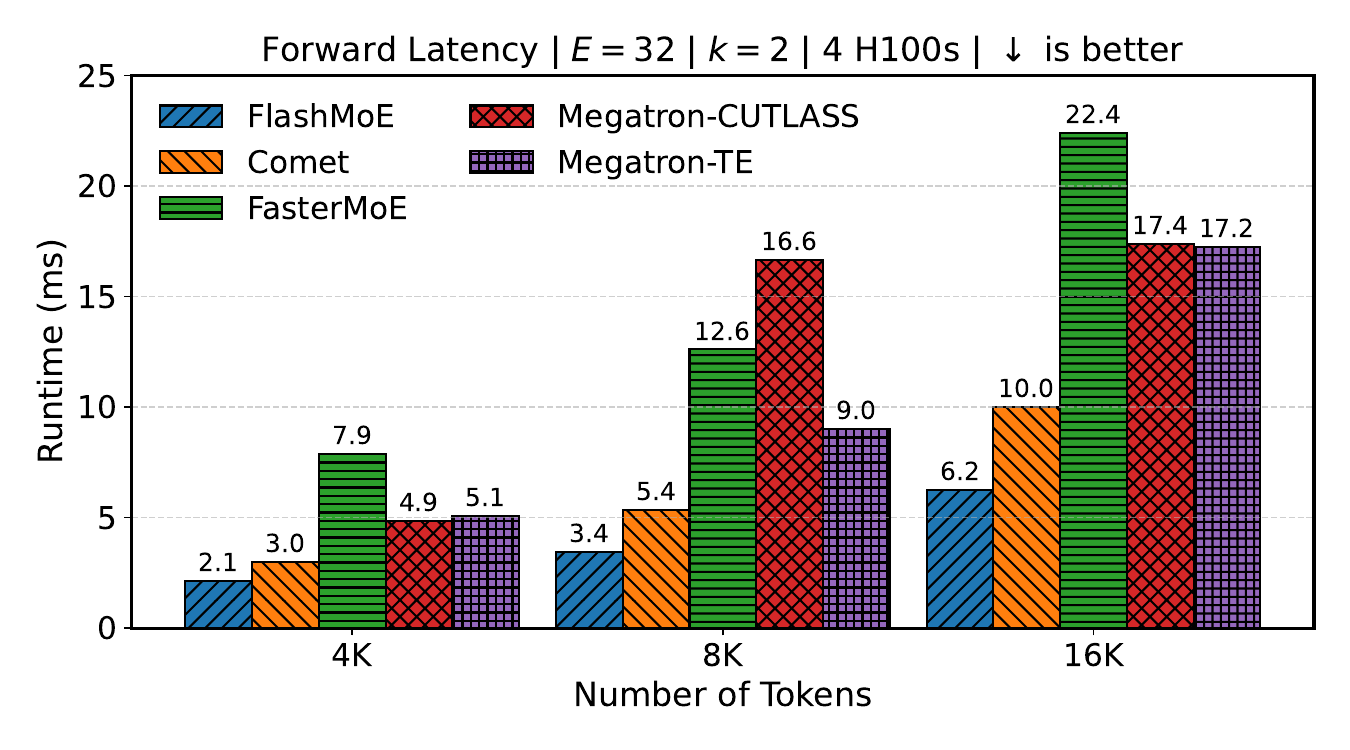}
        \caption{4 H100s}
        \label{sub:4gl}
    \end{subfigure}
    \begin{subfigure}{0.49\textwidth}
        \centering
        \includegraphics[width=\linewidth, keepaspectratio]{flash_figs/scaling_tokens_8}
        \caption{8 H100s}
        \label{sub:8gl}
    \end{subfigure}
    \caption{Forward Latency as the \emph{Number of Tokens} per GPU increases.}
    \label{fig:fl}
\end{figure}
We first measure the forward latency of \sysname across different sequence lengths on both 4 and 8 GPU setups
(Figure~\ref{fig:fl}).
\sysname consistently outperforms all baselines,
with especially notable improvements at longer sequence lengths.
On 4 GPUs, it achieves up to \textbf{4.6}x speedup over Megatron-TE at 16K tokens,
and \textbf{2.6}x over FasterMoE.
The gains are even more pronounced at 8 GPUs
where \sysname maintains low latency, exhibiting up to \textbf{6.4}x speedup over baselines that
degrade steeply due to increasing communication costs as token buffers increase proportionally.
\subsection{GPU Utilization}\label{subsec:gpu-utilization}
\begin{wrapfigure}{r}{0.5\textwidth}
    \vspace{-10pt}
    \centering
    \includegraphics[width=0.9\linewidth, keepaspectratio]{flash_figs/sm_util}
    \caption{SM utilization, defined as the ratio of cycles in which SMs
    have at least one warp in flight
    to the total number of cycles~\cite{nsight-metrics}.
    Values represent the average SM utilization over 100 iterations.}
    \label{fig:smu}
\end{wrapfigure}
To quantify GPU efficiency, we measure Streaming Multiprocessor (SM) utilization during the forward pass (Figure~\ref{fig:smu}).
\sysname achieves 93.17\% average SM utilization,
over \textbf{9}x higher than FasterMoE (9.67\%), \textbf{6.8}x higher than DeepEP+Megatron-LM (13.55\%)
\textbf{4}x higher than Megatron-TE (59.11\%), and
\textbf{2.2}x higher than Comet (42.31\%).
This improvement stems from our fully fused kernel architecture and
fine-grained pipelining of compute and communication tasks.
By eliminating idle gaps due to kernel launches and enabling in-kernel task scheduling,
\sysname ensures SMs remain busy with productive work throughout execution.
\subsection{Throughput}\label{subsec:throughput}
\begin{wrapfigure}{r}{0.6\textwidth}
    \vspace{-15pt}
    \centering
    \includegraphics[width=0.9\linewidth, keepaspectratio]{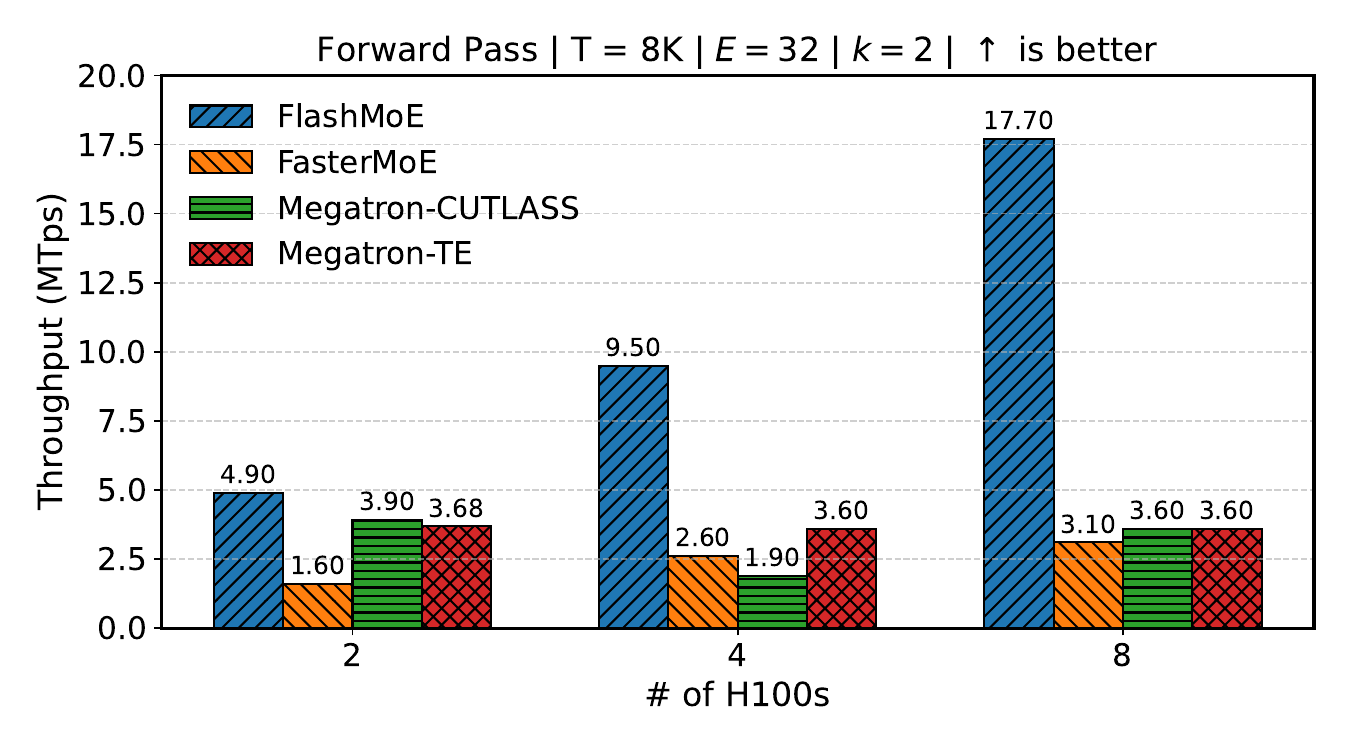}
    \caption{Throughput when scaling the number of GPUs, computed as $\frac{T \times N_G}{\text{latency}}$.}
    \label{fig:thr}
\end{wrapfigure}
As shown in Figure~\ref{fig:thr}, \sysname scales linearly with GPU count, reaching 17.7 MTokens/s at 8 GPUs.
This is over \textbf{5.7}x higher than FasterMoE and \textbf{4.9}x higher than Megatron-TE and Megatron-CUTLASS\@.
Notably, these results are achieved despite \emph{\sysname operating entirely in FP32,
    while baselines use FP16}.
This indicates that \sysname’s design eliminates throughput bottlenecks not by
exploiting lower precision, but by maximizing hardware utilization and eliminating host-driven inefficiencies.
\subsection{Overlap Efficiency}\label{subsec:overlap-efficiency}
\begin{figure}[!h]
    \centering
    \begin{subfigure}{0.49\textwidth}
        \centering
        \includegraphics[width=\linewidth, keepaspectratio]{flash_figs/scaling_gpus_8}
        \caption{Latency as Number of GPUs increases.}
        \label{fig:lng}
    \end{subfigure}
    \begin{subfigure}{0.49\textwidth}
        \centering
        \includegraphics[width=\linewidth, keepaspectratio]{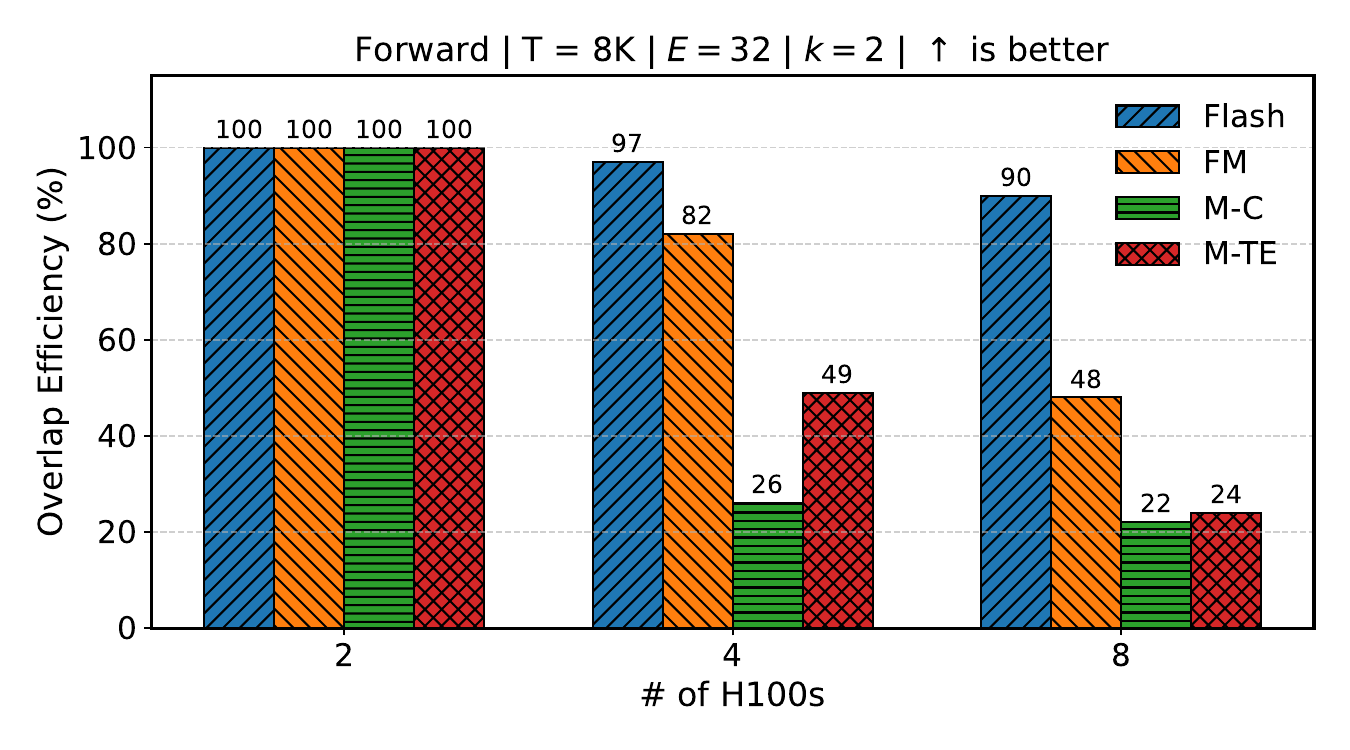}
        \caption{Weak scaling efficiency}
        \label{fig:oe}
    \end{subfigure}
    \caption{Weak scaling efficiency. We define Overlap Efficiency $O_e$
        to be $O_e = T(2) / T(N_G)$, where $T(N_G)$ is the latency at $N_G$ GPUs and $T(2)$ is the latency at 2 GPUs.}
    \label{fig:oet}
\end{figure}
We evaluate the extent to which \sysname overlaps communication and computation by measuring weak scaling efficiency
as the number of GPUs increases (Figure~\ref{fig:oe}).
We note that most baselines fail to execute at a single GPU, hence why we use 2 GPUs as the reference point.
We observe that Megatron-CUTLASS and Megatron-TE degrade significantly,
with overlap efficiency dropping below 50\% at $\geq 4$ GPUs. \sysname gives up to \textbf{3.88}x and
\textbf{4}x higher efficiency at 4 and 8 GPUs, respectively.
Figure~\ref{fig:lng} further illuminates this efficiency, as \sysname shows stable forward latency growth.
These results corroborate that \sysname's actor-based design and asynchronous data movement
achieve near-ideal overlap.
\subsection{Expert Scalability}\label{subsec:expert-scalability}
\begin{figure}[!h]
    \centering
    \begin{subfigure}{0.49\textwidth}
        \centering
        \includegraphics[width=\linewidth, keepaspectratio]{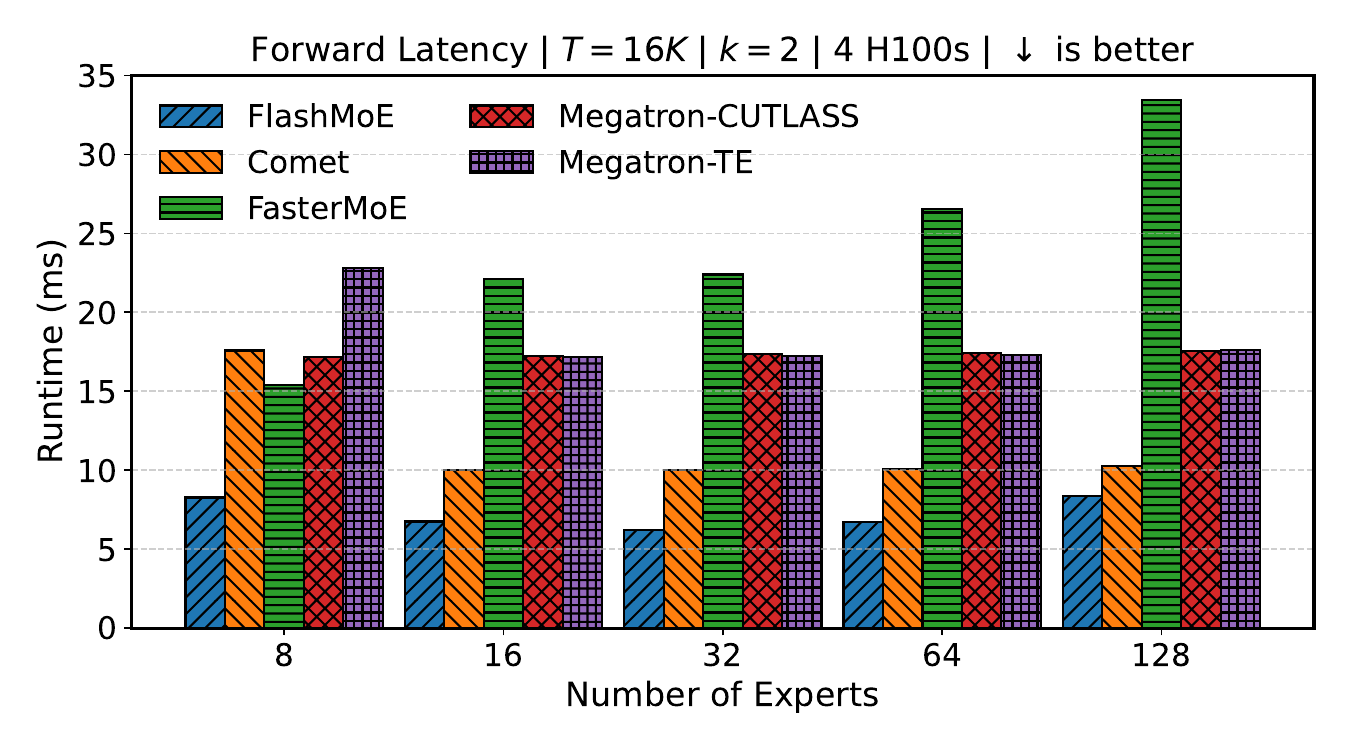}
        \caption{4 H100s}
        \label{sub:4gx}
    \end{subfigure}
    \begin{subfigure}{0.49\textwidth}
        \centering
        \includegraphics[width=\linewidth, keepaspectratio]{flash_figs/scaling_experts_8}
        \caption{8 H100s}
        \label{sub:8gx}
    \end{subfigure}
    \caption{Forward Latency as the \emph{Number of experts} increases.}
    \label{fig:xs}
\end{figure}
We analyze how \sysname scales with increasing number of experts at fixed sequence length (T = 16K).
Note that for the discussed plots, the number of experts on the x-axis is the \emph{total number across all GPUs}.
Each GPU gets 1/8th of this value.
As seen in Figure~\ref{fig:xs}, \sysname maintains \emph{low, uniform} latency, as desired,
even as the number of experts grows from 8 to 128.
In contrast, baselines exhibit superlinear latency increases due to increased kernel launch overheads.
\sysname outperforms these baselines by up to \textbf{4}X at 4 H100s and \textbf{6.6}X at 8 H100s, both at 128 experts.
\sysname’s payload-efficient communication and scheduler-driven
in-kernel dispatching allow it to sustain expert parallelism
without incurring the communication and orchestration penalties seen in other systems.
These results reinforce \sysname’s scalability for ultra-sparse MoE configurations.

    \section{Limitations and Future Work}\label{sec:limitations}
\myparab{Engineering complexity.} Fully fused, persistent kernels demand deep GPU + distributed-systems expertise;
future work may investigate compiler/DSL abstractions to lower this barrier.

\myparab{FP16 inefficiency.} Our FP16 path is suboptimal (\S\ref{sec:fp16-memory-throughput})
due to insufficient tuning.
We anticipate addressing this gap with autotuned GEMM operators like cuBLASDx~\cite{cudx} or
CUTLASS builders.

\myparab{Training support.} This work targets inference;
enabling training will require fusing backward computation and gradient communication with new bookkeeping and
task descriptors.

\section{Conclusion}\label{sec:conclusion-and-future-work}
We introduce \sysname, the first work to fuse the entire Distributed MoE operator into a
single persistent GPU kernel that unifies computation,
communication, and scheduling via actor-style concurrency, warp specialization, and async (R)DMA\@.
We address two dominant bottlenecks in prior systems—CPU-managed synchronous communication and fragmented multi-kernel execution.
Empirically, FlashMoE achieves up to \textbf{6×} speedup, \textbf{9×} higher GPU utilization, and \textbf{5.7×}
throughput for distributed MoE. Looking ahead, we see a shift from CPU orchestration to fully autonomous,
GPU-native pipelines—extending this fusion approach to training and beyond.

\section{Acknowledgements}\label{sec:acknowledgements}
This research is supported by NSF Award \#2444537 and ACE, one of the seven centers in JUMP 2.0,
a Semiconductor Research Corporation (SRC) program sponsored by DARPA.
This work also used resources of the
National Energy Research Scientific Computing Center,
a DOE Office of Science User Facility supported by the Office of Science of the U.S.
Department of Energy under Contract No. DE-AC02-05CH11231 using NERSC award ASCR-ERCAP0030076.
We acknowledge and thank Dr. Giulia Guidi for providing access to these NERSC supercomputing resources.
    \clearpage
    \bibliography{main}
    \clearpage

\newpage
\section*{NeurIPS Paper Checklist}
\begin{enumerate}

    \item {\bf Claims}
    \item[] Question: Do the main claims made in the abstract and introduction accurately reflect the paper's contributions and scope?
    \item[] Answer: \answerYes{} 

    \item {\bf Limitations}
    \item[] Question: Does the paper discuss the limitations of the work performed by the authors?
    \item[] Answer: \answerYes{} 

    \item {\bf Theory assumptions and proofs}
    \item[] Question: For each theoretical result, does the paper provide the full set of assumptions and a complete (and correct) proof?
    \item[] Answer: \answerYes{} 

    \item {\bf Experimental result reproducibility}
    \item[] Question: Does the paper fully disclose all the information needed to reproduce the main experimental results of the paper to the extent that it affects the main claims and/or conclusions of the paper (regardless of whether the code and data are provided or not)?
    \item[] Answer: \answerYes{} 

    \item {\bf Open access to data and code}
    \item[] Question: Does the paper provide open access to the data and code, with sufficient instructions to faithfully reproduce the main experimental results, as described in supplemental material?
    \item[] Answer: \answerYes{} 

    \item {\bf Experimental setting/details}
    \item[] Question: Does the paper specify all the training and test details (e.g., data splits, hyperparameters, how they were chosen, type of optimizer, etc.) necessary to understand the results?
    \item[] Answer: \answerNA{} 

    \item {\bf Experiment statistical significance}
    \item[] Question: Does the paper report error bars suitably and correctly defined or other appropriate information about the statistical significance of the experiments?
    \item[] Answer: \answerYes{} 
    \item[] Justification: All reported results in the evaluation section were obtained as the average of 32 executions preceded by 32 warmup runs.

    \item {\bf Experiments compute resources}
    \item[] Question: For each experiment, does the paper provide sufficient information on the computer resources (type of compute workers, memory, time of execution) needed to reproduce the experiments?
    \item[] Answer: \answerYes{} 

    \item {\bf Code of ethics}
    \item[] Question: Does the research conducted in the paper conform, in every respect, with the NeurIPS Code of Ethics \url{https://neurips.cc/public/EthicsGuidelines}?
    \item[] Answer: \answerYes{} 

    \item {\bf Broader impacts}
    \item[] Question: Does the paper discuss both potential positive societal impacts and negative societal impacts of the work performed?
    \item[] Answer: \answerNA{} 
    \item[] Justification: We do not foresee immediate social or ethical impacts, but we acknowledge that increased compute efficiency could amplify access to large-scale models, which raises general considerations around prevalent issues such as environmental cost of training, and responsible downstream use. We recommend that users of our system consider these factors when integrating it into broader ML applications.

    \item {\bf Safeguards}
    \item[] Question: Does the paper describe safeguards that have been put in place for responsible release of data or models that have a high risk for misuse (e.g., pretrained language models, image generators, or scraped datasets)?
    \item[] Answer: \answerNA{} 

    \item {\bf Licenses for existing assets}
    \item[] Question: Are the creators or original owners of assets (e.g., code, data, models), used in the paper, properly credited and are the license and terms of use explicitly mentioned and properly respected?
    \item[] Answer: \answerYes{} 

    \item {\bf New assets}
    \item[] Question: Are new assets introduced in the paper well documented and is the documentation provided alongside the assets?
    \item[] Answer: \answerYes{} 

    \item {\bf Crowdsourcing and research with human subjects}
    \item[] Question: For crowdsourcing experiments and research with human subjects, does the paper include the full text of instructions given to participants and screenshots, if applicable, as well as details about compensation (if any)?
    \item[] Answer: \answerNA{} 

    \item {\bf Institutional review board (IRB) approvals or equivalent for research with human subjects}
    \item[] Question: Does the paper describe potential risks incurred by study participants, whether such risks were disclosed to the subjects, and whether Institutional Review Board (IRB) approvals (or an equivalent approval/review based on the requirements of your country or institution) were obtained?
    \item[] Answer: \answerNA{} 

    \item {\bf Declaration of LLM usage}
    \item[] Question: Does the paper describe the usage of LLMs if it is an important, original, or non-standard component of the core methods in this research? Note that if the LLM is used only for writing, editing, or formatting purposes and does not impact the core methodology, scientific rigorousness, or originality of the research, declaration is not required.
    \item[] Answer: \answerNA{} 
\end{enumerate}
    \clearpage
    \appendix
\section{Supplementary Motivation}\label{sec:motivation-appx}
\begin{figure}[!ht]
    \centering
    \includegraphics[width=0.8\textwidth, keepaspectratio]{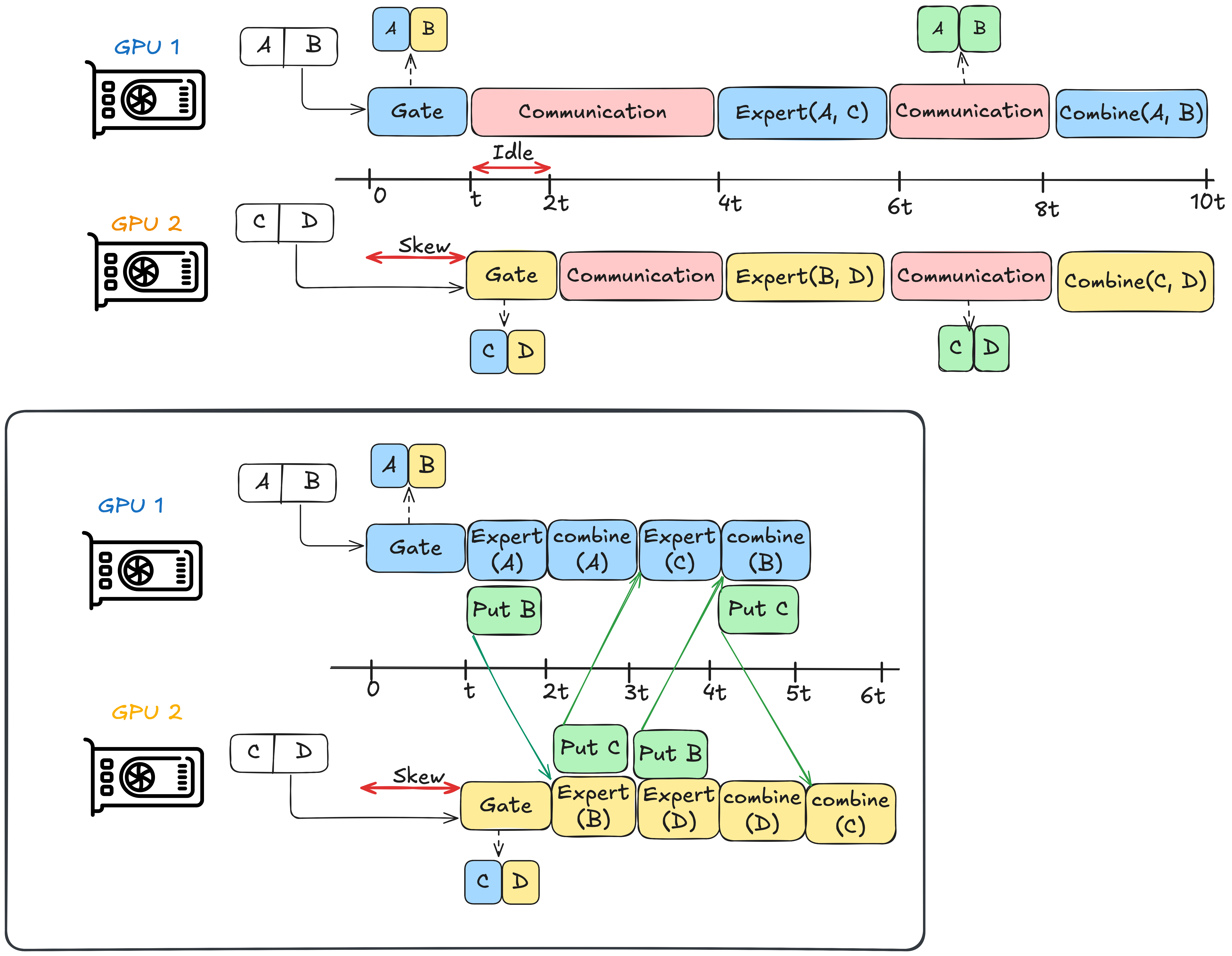}
    \caption{Overlapped Schedule (bottom) showing how idle time from the sequential schedule (top)
        is repurposed for computation. \sysname implements the overlapped schedule.}
    \label{fig:overlap}
\end{figure}
\begin{figure}[!h]
    \centering
    \begin{subfigure}{0.4\textwidth}
        \centering
        \includegraphics[width=\linewidth, keepaspectratio]{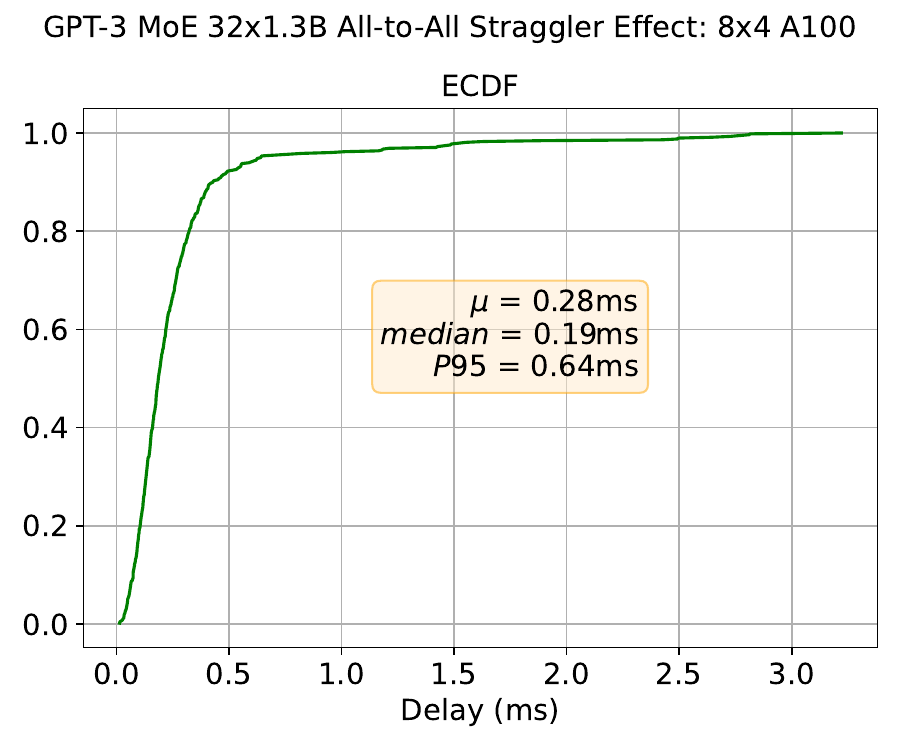}
        \caption{ECDF}
        \label{sub:ecdf_perl}
    \end{subfigure}
    \begin{subfigure}{0.4\textwidth}
        \centering
        \includegraphics[width=\linewidth, keepaspectratio]{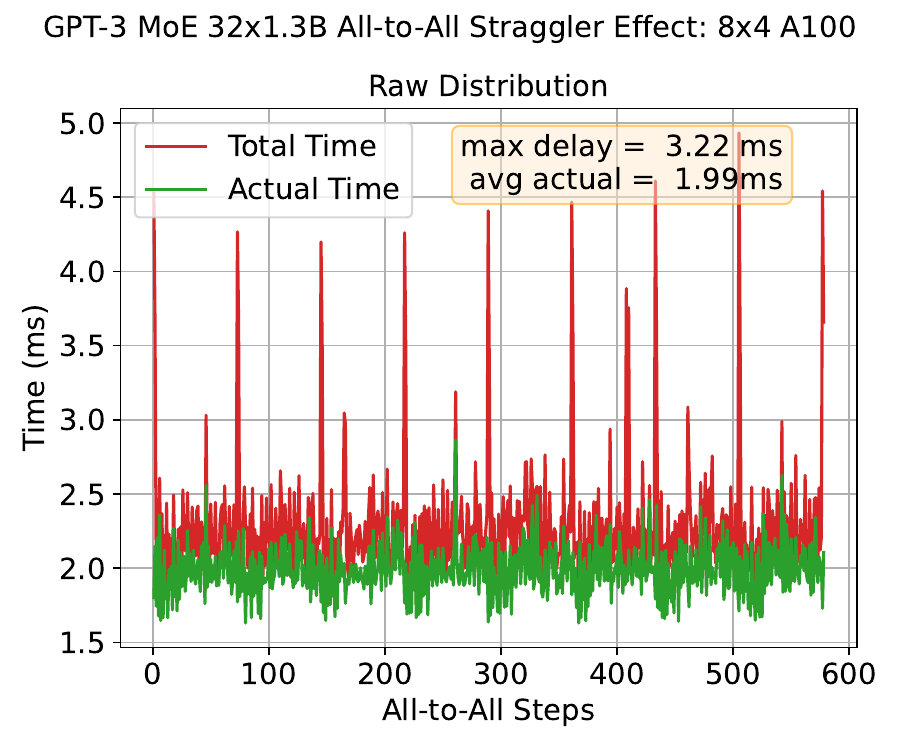}
        \caption{Raw Distribution}
        \label{sub:raw_perl}
    \end{subfigure}
    \vspace{1em} 
    \begin{subfigure}{0.4\textwidth}
        \centering
        \includegraphics[width=\linewidth, keepaspectratio]{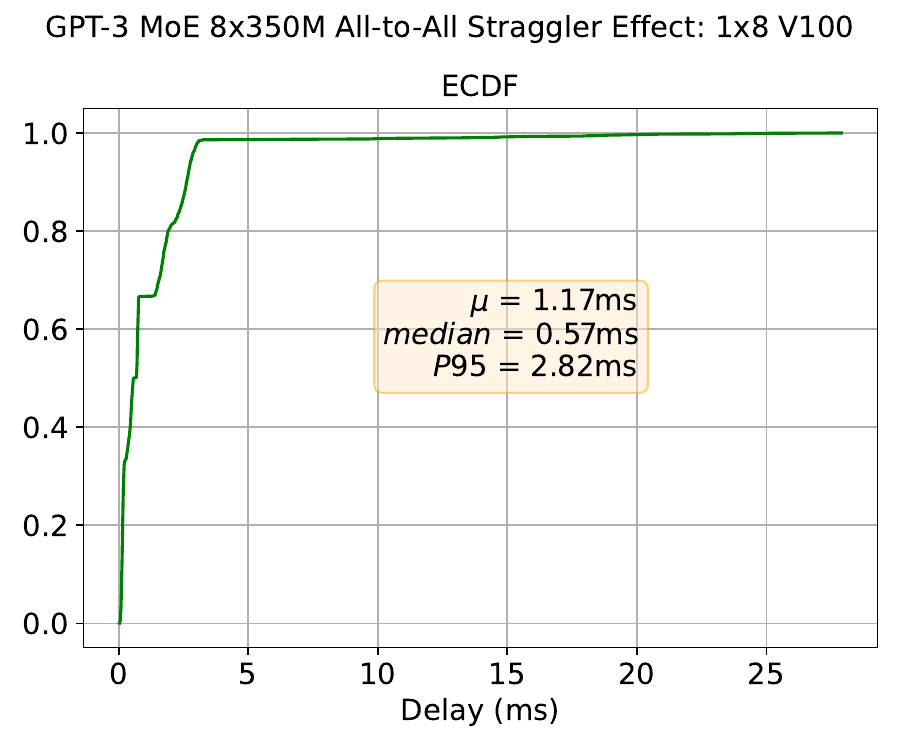}
        \caption{ECDF}
        \label{sub:ecdf_az}
    \end{subfigure}
    \begin{subfigure}{0.4\textwidth}
        \centering
        \includegraphics[width=\linewidth, keepaspectratio]{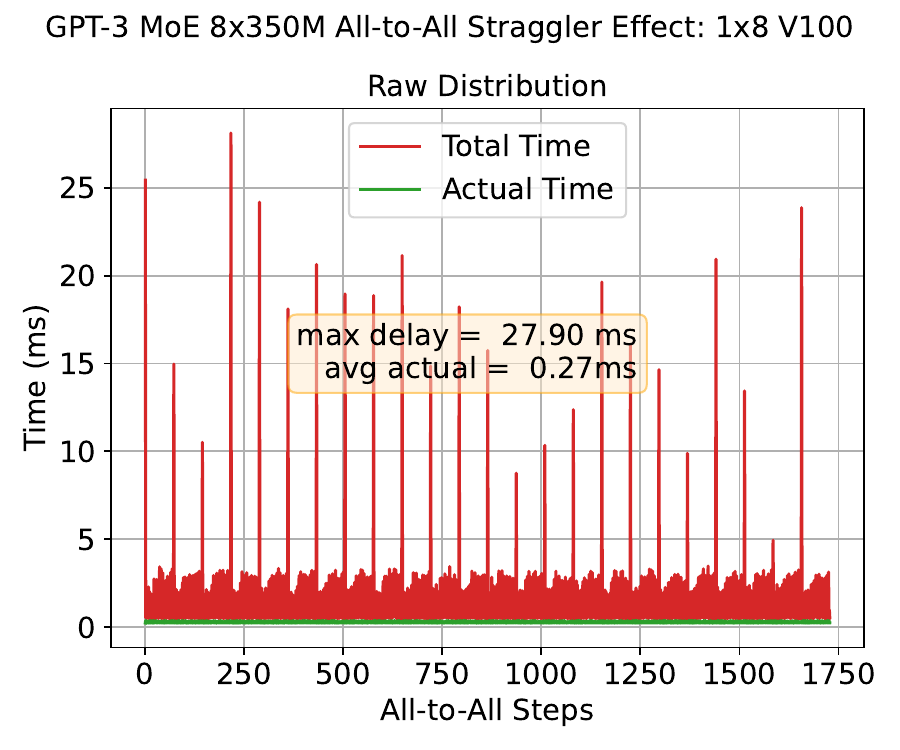}
        \caption{Raw Distribution}
        \label{sub:raw_az}
    \end{subfigure}
    \caption{Straggler effect of synchronous \alltoall. $M\times N$ A100 or V100 denotes
        $N$ GPUs within a node across $M$ nodes.
        Every GPU communicates with every other GPU per~\alltoall step.
        We capture the distribution of delay induced by stragglers across many steps.
        \textbf{Actual Time} $t_a$ denotes the fastest kernel execution time across all GPUs,
        conversely \textbf{Total Time} $t$ is the maximum recorded step time, while
        $Delay$ is the maximum difference between $t$ and $t_a$. Note $Delay$ is idle time.}
    \label{fig:straggler}
\end{figure}
In Figure~\ref{fig:straggler}, we present empirical cumulative and raw distributions of
\alltoall kernel runtime from distributed training of a 1.3B GPT-3 MoE model across
32 A100 and 8 V100 GPUs. We use this result to motivate the severity and prevalence of straggler effects.
In Figure~\ref{sub:raw_perl}, we observe P95
communication performance degradation of \textbf{1.32X} when compared to the mean actual kernel time.
This performance reduction is rather tame as the underlying hardware is a supercomputer well-tuned
against ``software jitter''~\cite{nerscNetworkNERSC}.
However, we observe a more severe p95 performance loss of \textbf{11X} in a single-node Virtual Machine (VM).
In line with prior HPC works~\cite{1639320, 10.1145/3545008.3545056},
we argue that obviating the inherent barrier in this synchronous collective communication would
allow GPUs to repurpose this observed idle time for downstream computation as depicted in Figure~\ref{fig:overlap}.
\begin{table}[!h]
    \centering
    \caption{Straggler Delay within Synchronous \emph{All-to-All} communication.
    We capture the distribution of delay induced by stragglers across many steps.
    Let \textbf{Actual Time} $t_a$ denote the fastest kernel execution time across all GPUs,
        and \textbf{Total Time} $t$ be the maximum recorded step time. We define
        $Delay$ as the maximum difference between $t$ and $t_a$. Note $Delay$ is idle time. For the
        1x8 V100, we profile 1750 steps and 600 steps for the 8x4 A100. See Figure~\ref{fig:straggler}
        for the raw distribution.}
    \label{tab:s_delays}
    \begin{tabular}{@{}lcccc@{}}
        \toprule
        \textbf{System}      & \multicolumn{1}{l}{\textbf{\# Nodes}} & \multicolumn{1}{l}{\textbf{\# GPUs}} & \textbf{Median} & \textbf{p95} \\ \midrule
        Commercial VM (V100) & 1                                     & 8                                    & 3.1x            & 11.4x        \\
        Supercomputer (A100) & 8                                     & 32                                   & 1.09x           & 1.32x        \\ \bottomrule
    \end{tabular}
\end{table}

    \section{Related Work}
\label{sec:related}
\noindent\textbf{Computation-Communication Overlap and Kernel Fusion.}
To reduce the communication overheads of synchronization in distributed DNN training, many research efforts have been focused on
increasing the overlap of computation and communication.
For generic Transformer-based models without MoE layers,
many works~\cite{coconet,decomposition,centauri,t3,megascale,co2,syndicate,ccfuser,fused}
have provided insights and techniques to
partition and schedule computation and communication operations, aimed at
finer-grained overlapping.
To address the challenges posed by \alltoall communication and
expert parallelism in MoE training, Tutel~\cite{tutel} and FasterMoE~\cite{fastermoe}
overlap \alltoall with expert computation.
Lancet~\cite{lancet} additionally enables both non-MoE computation in
forward pass and weight gradient computation in backward pass to be overlapped with \alltoall.
Despite overlapping, the performance of these approaches is
limited in practice due to blocking synchronous collective communication with barriers.
In contrast, \sysname fundamentally
eliminates these inefficiencies with
asynchronous, device-initiated data transfers overlapped with tiled computation
all \emph{within a single kernel}.
\sysname further differentiates itself from SOTA works like COMET~\cite{comet} and DeepEP~\cite{deepep},
which also use this form of kernel-initiated communication but at a coarse-grained granularity
and without complete kernel fusion.

\section{Proof of Theorem~\ref{theorem:ww}}\label{sec:proof-of-theorem}
We begin with two necessary definitions vital to the proof.
\begin{definition}
    Define a write as~$w(p_s, p_t, i)$, where $p_s$ is the source process and $i$ is an ordered tuple
    indicating the index coordinates for $L$ residing on the target process $p_t$.
    A write-write conflict occurs when there exist at least two distinct, un-synchronized, concurrent writes
    $w_1(p_{s_1}, p_{t_1}, i_1)$ and $w_2(p_{s_2}, p_{t_2}, i_2)$, such that
    $p_{t_1} = p_{t_2}$ and index coordinates $i_1 = i_2$ but $p_{s_1} \neq p_{s_2}$
\end{definition}
\begin{definition}
    For any source process $p_s$, a valid index coordinate $i = (p*, r, b, e, c)$ satisfies the following:
    \begin{enumerate}
        \item For inter-device writes, it must hold that $p* = p_s$ and $b = 1$.
        Note this also applies to self-looping writes $w(p_t, p_t, i)$.
        \item For any write $w(p_s, p_t, i)$, if $b = 0$, then $p_s = p_t$.
        This rule describes intra-device staging writes.
    \end{enumerate}
\end{definition}
We restate Theorem~\ref{theorem:ww} and outline its proof below.
\begin{theorem}\label{theorem:ww2}
The symmetric tensor layout $L$ is write-write conflict-free.
\end{theorem}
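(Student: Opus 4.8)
The plan is to argue by contradiction using the two definitions just introduced. Suppose $L$ admits a write-write conflict. Then there exist distinct, unsynchronized, concurrent writes $w_1(p_{s_1}, p_{t_1}, i_1)$ and $w_2(p_{s_2}, p_{t_2}, i_2)$ with $p_{t_1} = p_{t_2} =: p_t$, identical index coordinates $i_1 = i_2 =: i = (p^{*}, r, b, e, c)$, and $p_{s_1} \neq p_{s_2}$. Since the staging-buffer dimension $B$ has size $2$, the shared coordinate $b$ is either $0$ or $1$, and I would split the argument on this value.

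In the case $b = 1$, both writes land in a cell whose staging index is $1$, so by the definition each is an inter-device write (the self-looping case $w(p_t,p_t,i)$ is governed by the same rule). The first validity condition then forces $p^{*} = p_{s_1}$ and $p^{*} = p_{s_2}$, whence $p_{s_1} = p_{s_2}$, contradicting $p_{s_1}\neq p_{s_2}$. In the case $b = 0$, each write is an intra-device staging write, so the second validity condition forces $p_{s_1} = p_{t_1} = p_t$ and $p_{s_2} = p_{t_2} = p_t$, again yielding $p_{s_1} = p_{s_2}$ and the same contradiction. Since $b\in\{0,1\}$ exhausts the staging dimension, no conflict can arise, and $L$ is write-write conflict-free.

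The conceptual core — and the reason the layout carries a leading $P$-dimension at all — is that the pair $(p^{*}, b)$ of a valid index deterministically identifies the source of any write reaching a given cell: when $b=1$ the cell is ``owned'' by the writer recorded in $p^{*}$, and when $b=0$ the cell is purely local and only its host writes to it. The case analysis itself is short; the step I expect to require the most care is the bridge from this abstract statement to the concrete \sysname protocol, namely verifying that \emph{every} write the system issues genuinely produces a valid index coordinate in the sense of the second definition. I would discharge this by enumerating the write sites in the dispatch and combine rounds and in the in-place padding routine — staging into $b=0$ on the local device, the remote put into $b=1$ on the destination, and the self-loop for the resident expert — and checking against the DMA/RDMA state machines in Figure~\ref{fig:sm} that conditions (1) and (2) are preserved at each, so that conflict-freedom transfers from $L$ as an abstract layout to the running kernel.
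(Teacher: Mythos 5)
Your proof is correct and follows essentially the same route as the paper's: a contradiction argument showing that the validity conditions on index coordinates force $p_{s_1} = p_{s_2}$, with your split on $b \in \{0,1\}$ being a slightly cleaner organization of the paper's staging-vs-inter-process case analysis (the paper's extra case $p_{s_1}=p_{s_2}$ is already excluded by the conflict definition, as you implicitly note). Your closing observation---that the theorem only transfers to the running system if every write the kernel issues is verified to produce a valid coordinate---is a legitimate gap the paper's proof also leaves implicit.
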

\begin{proof}
    As is the case for typical physical implementations,
    assume that each index coordinate $i$ maps to a distinct memory segment in $L$.
    Next, we show by contradiction that no write-write conflicts can exist when accessing $L$ using \emph{valid} $i$.
    For simplicity, we only include the index coordinates when describing a write.
    Assume that there exist at least two writes $w_1(p_{s_1}, p_{t_1}, i_1),\>w_2(p_{s_2}, p_{t_2}, i_2)$
    with $p_{t_1} = p_{t_2}$ and valid destination coordinates
    $i_1, i_2$, where $i_1 = i_2$ lexicographically and both are unpacked below.
    \[
        i_1 = (p_1, r_1, b_1, e_1, c_1),\> i_2 = (p_2, r_2, b_2, e_2, c_2)
    \]
    Note that intra-process writes always have distinct $c_j$
    coordinates, where $j \in \{0, C - 1\}$.
    For inter-process transfers, we have two cases.

    \textit{Case 1: $p_{s_1} = p_{s_2}$}
    \newline Here, $w_1$ and $w_2$ are identical operations.
    This contradicts the definition of a conflict, which requires that $p_{s_1} \neq p_{s_2}$.
    In practice, such repeat writes never even occur.

    \textit{Case 2: $p_{s_1} \neq p_{s_2}$}
    \newline To ensure validity for $i_1$ and $i_2$, it is the case that $p_1 = p_{s_1}$ and $p_2 = p_{s_2}$.
    However, this implies that $i_1 \neq i_2$ yielding a contradiction as desired.
\end{proof}

\section{Memory Overhead}\label{sec:eval:memory}
We measure the GPU memory required for the symmetric tensor $L$ and runtime bookkeeping state of \sysname.
The memory overhead primarily depends on the tile size, expert capacity ($EC$), and the number of experts ($E$).
Table~\ref{tab:memory-overhead} summarizes the memory overhead across recent MoE models~\cite{moonlight,grok,snowflake-arctic,qwen3,mixtral,deepseek} during inference, showing that \sysname maintains a modest and predictable memory footprint.
In particular, the symmetric tensor ($ST$) accounts for at most 2.15\% additional memory relative to the total inference memory requirements.
\begin{table}[!ht]
    \centering
    \caption{Memory overhead of \sysname (tile size $bM = 128$, $Size(T) = \text{Tokens} \times 4\text{KB}$).}
    \label{tab:memory-overhead}
    \small
    \setlength{\tabcolsep}{5pt}
    \renewcommand{\arraystretch}{0.9}
    \begin{tabular}{ccccccccc}
        \toprule
        \textbf{Model} & \textbf{Params} & \textbf{S} & \textbf{E} & \textbf{H} & \textbf{I} & \textbf{ST (GB)} & \textbf{Model (GB)} & \textbf{Overhead (\%)} \\
        \midrule
        Moonlight-16B-A3B  & 16B  & 8K   & 64   & 2K   & 1.38K  & 0.25  & 59    & \textbf{0.49} \\
        Grok-1             & 314B & 8K   & 8    & 6K   & 32K     & 0.75  & 1169  & \textbf{0.15} \\
        Snowflake-Arctic   & 479B & 4K   & 128  & 7K   & 4.75K   & 1.75  & 1784  & \textbf{0.12} \\
        Qwen3-235B-A22B    & 235B & 40K  & 128  & 4K   & 1.5K    & 3.00  & 875   & \textbf{0.38} \\
        Mixtral 8x7B       & 47B  & 32K  & 8    & 4K   & 14K     & 2.00  & 175   & \textbf{2.15} \\
        DeepSeek-V3        & 685B & 160K & 256  & 7K   & 2K      & 1.50  & 2551  & \textbf{0.11} \\
        \bottomrule
    \end{tabular}
    \vspace{-0.4cm}
\end{table}
    \clearpage
\section{Task Implementation}\label{sec:task-implementation}
\begin{figure}[!ht]
    \centering
    \includegraphics[width=4in]{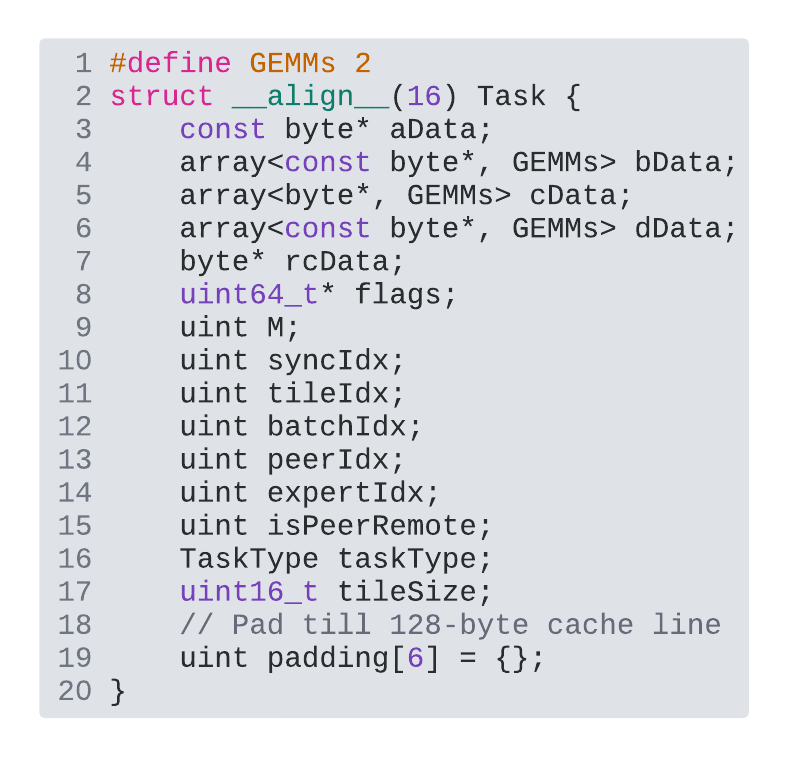}
    \caption{\emph{Task Struct}. $\text{TaskType} \in \{GEMM_0,\>GEMM_1,\>Combine\}$}
    \label{fig:task}
\end{figure}
    \clearpage
    \clearpage
    \section{Actors}\label{sec:processor-algorithm}
\subsection{Processor}\label{subsec:processor}
\begin{algorithm}[!ht]
    \DontPrintSemicolon
    \caption{~\emph{Processor Actor}: executed by a block}\label{alg:processor}
    \Begin{
        $tQ \gets \mathbf{GetTQ}()$\;
        $signal \gets 0$\;
        \tcp{shared memory variables}
        $task \gets \{\}$\;
        $interrupt \gets \kwFalse$\;
        $complete \gets \kwFalse$\;
        \While{$interrupt == $ \kwFalse}{
            \If{$warpId == 0$}{
                \If{$threadId == 0$}{
                    $\mathbf{awaitTaskFromScheduler}(interrupt,\>signal)$\;
                    $\mathbf{FencedNotifyRQ}(ready)$\;
                }
                $\mathbf{syncwarp}()$\;
                $\mathbf{warpReadTQ}(tQ, \>signal, \>task)$\;
            }
            $\mathbf{syncthreads}()$\;
            \If{$interrupt == $ \kwFalse}{
                \Switch{task.Type}{
                    \Case{$GEMM_0$}{
                        \tcp{fused GEMM, epilogue and async tile staging}
                        $\mathbf{fGET}(GEMM_0, \>task)$\;
                        \If{$threadId == 0$}{
                            $complete \gets \mathbf{NotifyTileCompletion}()$\;
                        }
                        $\mathbf{syncthreads}()$\;
                        \If{$complete == \kwTrue$}{
                            $\mathbf{NotifySchedulerNextGEMM}(tQ)$\;
                        }
                    }
                    \Case{$GEMM_1$}{
                        \tcp{fused GEMM, epilogue and async tile transfer}
                        $\mathbf{fGET}(GEMM_1, \>task)$\;
                    }
                    \Case{$Combine$}{
                        $\mathbf{combine}(task)$\;
                    }
                }
            }
        }
    }
\end{algorithm}
\clearpage
\subsection{Scheduler}\label{subsec:scheduler}
\begin{algorithm}[!ht]
    \DontPrintSemicolon
    \SetKwInput{KwInput}{Input}
    \SetKwBlock{DoParallel}{do in parallel}{end}
    \caption{~\emph{Scheduler Actor}: executed by one warp}\label{alg:scheduler}
    \Begin{
        $scheduled \gets 0$\;
        $tTB \gets 0$\;
        $tqState \gets \{\}$\;
        $pTDB \gets \mathbf{GetProcessorDoorbell}()$\;
        $sTDB \gets \mathbf{GetSubscriberDoorbell}()$\;
        $taskBound \gets \mathbf{GetTaskBound}()$\;
        $tTB \gets \mathbf{AtomicLoad}(taskBound)$\;
        \tcp{circular buffer ready queue}
        $rQ \gets \{\}$\;
        \tcp{Populate ready queue with Processor ids}
        $\mathbf{PopulateRQ}(rQ)$\;
        \While{$scheduled < tTB$}{
            $lt \gets 0$\;
            \DoParallel{
                $\text{Sweep doorbells and populate observed task counts into } tqState$\;
                $\text{Aggregate locally observed task counts into } lt$\;
            }
            $qS,\>taskTally \gets 0$\;
            \tcp{qS is the inclusive output}
            $\mathbf{WarpInclusiveSum}(lt, qS, tasktally)$\;
            \While{$tasktally > 0$}{
                $\text{Repopulate } rQ \text{ with ready processor ids}$\;
                \DoParallel{
                    $\text{Starting at } rQ[qS] \text{, signal processors about task indices from } tqState$
                }
            }
            \If{$threadId == 0$}{
                $tTB \gets \mathbf{AtomicLoad}(taskBound)$\;
            }
            $tTB \gets \mathbf{WarpBroadcast}(tTB)$
        }
        $\mathbf{InterruptSubscribers}()$\;
        $\mathbf{InterruptProcessors}()$\;
    }
\end{algorithm}
\clearpage
\subsection{Subscriber}\label{subsec:subscriber}
\begin{algorithm}[!ht]
    \DontPrintSemicolon
    \SetKwBlock{DoParallel}{do in parallel}{end}
    \SetKwInput{KwInput}{Input}
    \KwInput{$T_{\phi} \in \left(\mathbb{R}^2\right)^{E \times C}$, $G_{\phi} \in \mathbb{R}^{S \times E}$
        $O \in \mathbb{R}^{S \times H}$, $X \in \mathbb{R}^{E\times H \times D}$}
    \caption{~\emph{Subscriber Actor}: executed by three warps}\label{alg:susbcriber}
    \Begin{
        $interrupt \gets \mathbf{GetSharedInterrupt}()$\;
        $flags \gets \mathbf{GetSymmetricFlags}()$\;
        $tQ \gets \mathbf{GetTQ}()$\;
        \tcp{Predefined upper bound on the number of tasks.}
        \tcp{We modulate this value to the actual task count computed}
        \tcp{dispatch signals received from peer GPUs}
        $taskBound \gets \mathbf{GetTaskBound}()$\;
        \While{$\mathbf{AtomicLoad}(interrupt) == $ \kwFalse}{
            \tcp{dispatch flags}
            \DoParallel{
                $\text{Visit dispatch flags}$\;
                $\text{Atomically retrieve signal}$\;
                \If{$\text{Signal is set and flag is not visited}$}{
                    $\text{Mark visited}$\;
                    $\mathbf{SelfCorrectTaskBound}(taskBound, Signal)$\;
                    $\text{Enforce memory consistency before consuming packet}$\;
                    $\text{Decode packet into a set of } GEMM_0 \text{ task descriptors using } X$\;
                    $\text{Write task descriptors to } tQ$\;
                    $\text{Notify Scheduler of decoded tasks}$\;
                }
            }
            $\text{Advance flags by number of dispatch flags length}$\;
            $\text{Atomically retrieve signal}$\;
            \tcp{combine signals}
            \DoParallel{
                $\text{Visit combine flags: one per tile}$\;
                \If{$\text{Signal is set and flag is not visited}$}{
                    $\text{Mark visited}$\;
                    $\text{Enforce memory consistency before consuming packet}$\;
                    $\text{Decode packet into a set of } combine \text{ task descriptors using } T_{\phi},\> G_{\phi}, O$\;
                    $\text{Write task descriptors to } tQ$\;
                    $\text{Notify Scheduler of decoded tasks}$\;
                }
            }
        }
    }
\end{algorithm}
\clearpage
    \section{Implementation}\label{sec:implementation}
\begin{table}[h]
    \centering
    \caption{Implementation metrics of~\sysname.}
    \label{tab:kleos-metrics}
    \begin{tabular}{cc}
        \toprule
        \textbf{Metric} & \textbf{Value}\\ \hline
        Total lines of code (CUDA/C++) & 6820 \\
        Kernel stack frame size & 0 B \\
        Spill stores (per thread) & 0 \\
        Spill loads (per thread) & 0 \\
        Shared memory usage (per block) & 46 KB \\
        Registers per thread & 255 \\
        Max active blocks per SM & 2 \\
        Compilation time & 53 seconds \\
        Binary size & 29 MB\\
    \end{tabular}
\end{table}

\section{FP16 Memory Throughput}\label{sec:fp16-memory-throughput}
\begin{figure}[!ht]
    \centering
    \begin{subfigure}{\textwidth}
        \centering
        \includegraphics[width=0.8\linewidth, keepaspectratio]{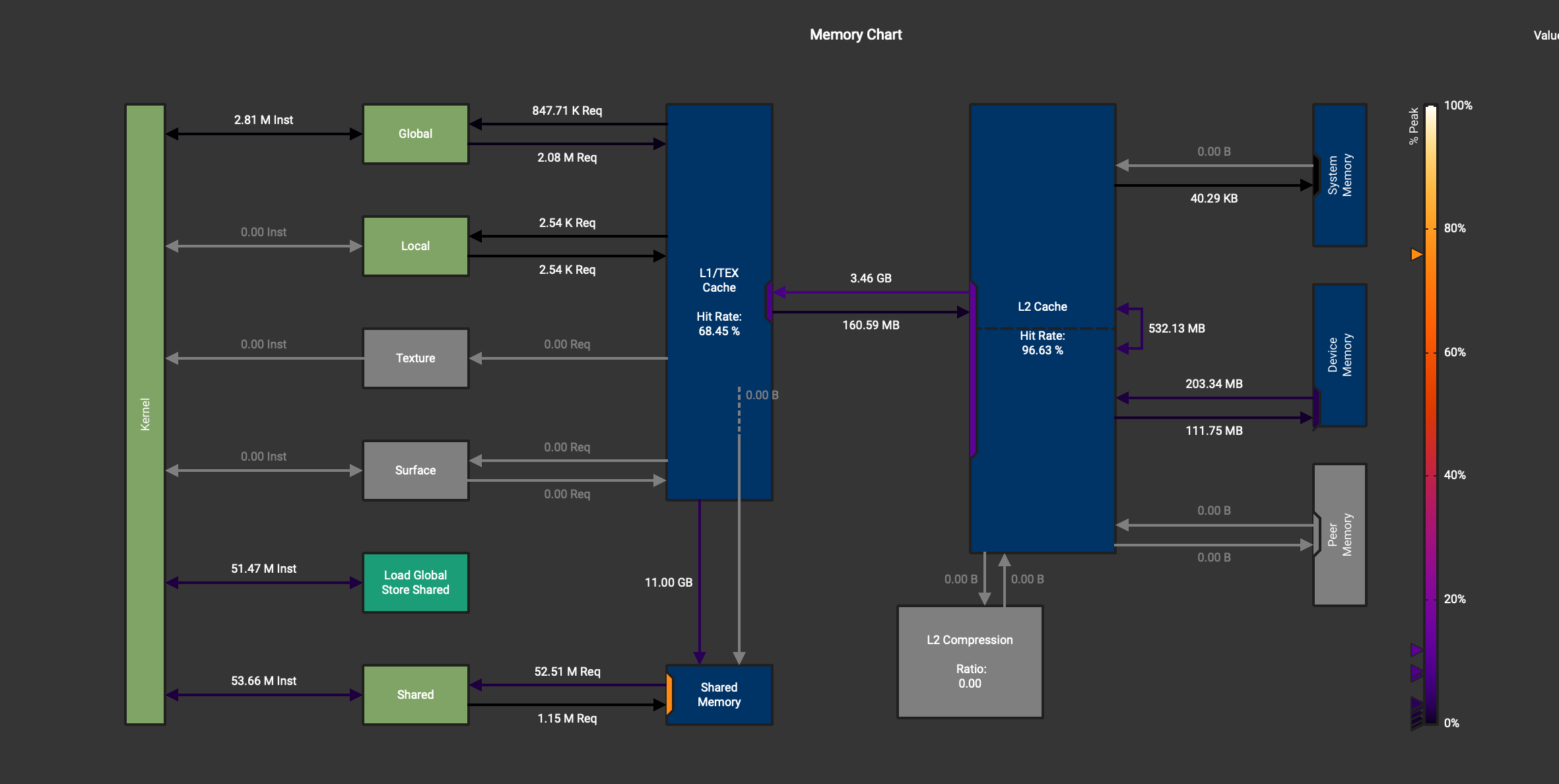}
        \caption{Memory subsystem throughput for FP16}
        \label{sub:fp16}
    \end{subfigure}
    \hfill
    \begin{subfigure}{\textwidth}
        \centering
        \includegraphics[width=0.8\linewidth, keepaspectratio]{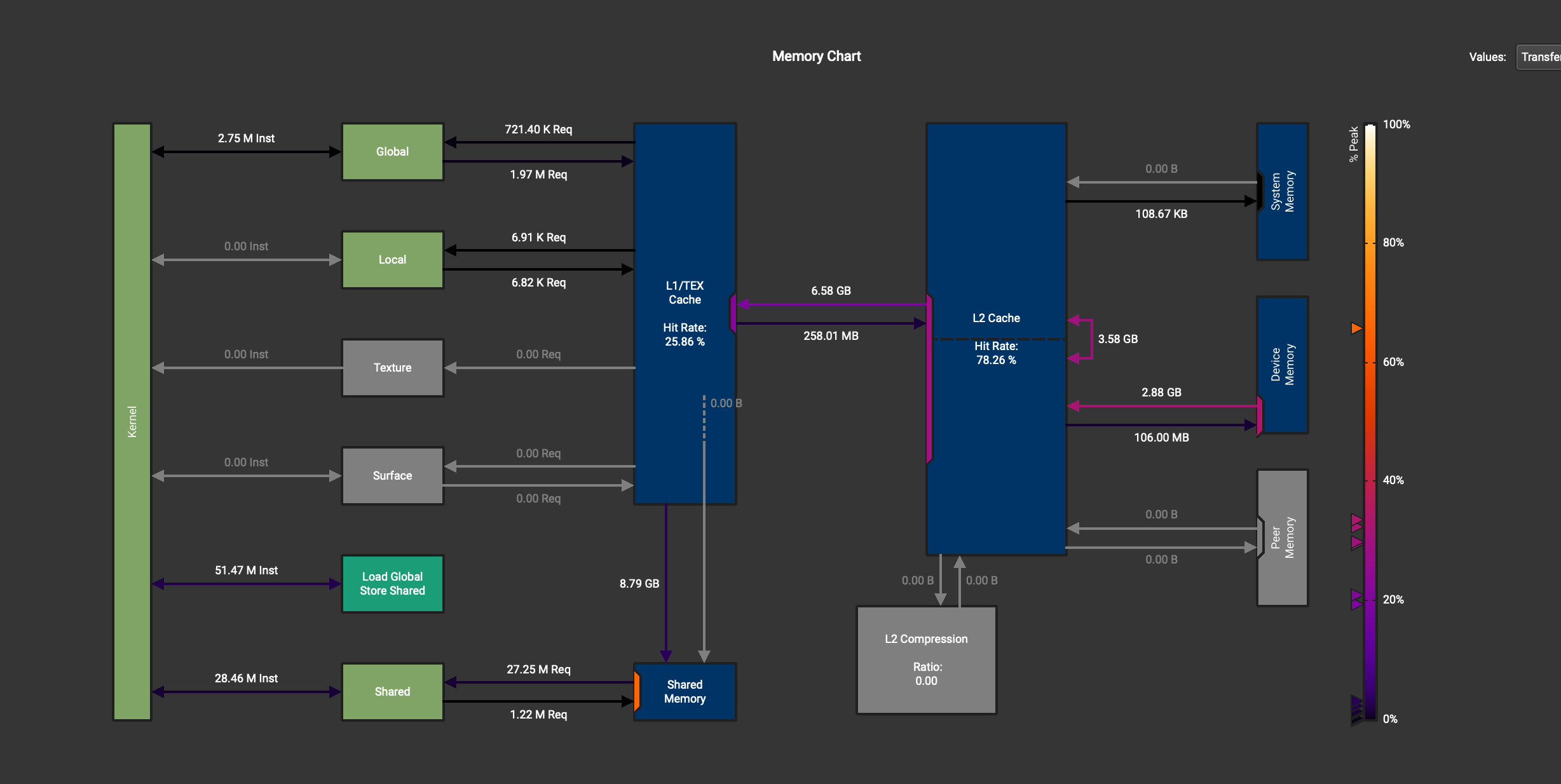}
        \caption{Memory subsystem throughput for FP32}
        \label{sub:fp32}
    \end{subfigure}
    \caption{Here, we report the total A100 memory throughput for both FP16 (top) and FP32 (bottom) variants of \sysname.
    Notably, the FP16 implementation issues approximately $2\times$
        more shared memory instructions compared to its FP32 counterpart
        under identical workloads.
        We attribute this inefficiency to
        suboptimal shared memory layouts in \sysname when
        operating on half-precision data.
        While this bottleneck is addressable through improved layout strategies,
        we leave its resolution to future work.}
    \label{fig:mem_t}
\end{figure}

    \clearpage
\end{document}